\newtheorem{theorem}{Theorem}
\newtheorem{corollary}{Corollary}
\newtheorem{lemma}{Lemma}
\newtheorem{remark}{Remark}
\title{Transient stability guarantees for ad hoc dc microgrids}
\author{Kathleen Cavanagh, Julia A. Belk,  Konstantin Turitsyn \thanks{KC (kcav@mit.edu) and KT are with the  Mechanical Engineering Department, Massachusetts Institute of Technology, Cambridge, MA. JAB is with the Computer Science Department, Stanford University, Stanford, CA. This work was sponsored by the MIT Tata Center for Technology and Design}.
}
\begin{document}

\maketitle
\thispagestyle{empty}
\pagestyle{empty}

\begin{abstract}
Ad hoc electrical networks are formed by connecting power sources and loads without planning the interconnection structure (topology) in advance. They are designed to be installed and operated by individual communities---without central oversight---and as a result are well-suited to addressing the lack of electricity access in rural and developing areas. However, ad hoc networks are not widely used, and a major technical challenge impeding their development (and deployment) is the difficulty of certifying network stability without a priori knowledge of the topology. We develop conditions on individual power sources and loads such that a microgrid comprised of many units will be stable. We use Brayton-Moser potential theory to develop design constraints on individual microgrid components that certify transient stability---guaranteeing that the system will return to a suitable equilibrium after load switching events. Our central result is that stability can be ensured by installing a parallel capacitor at each constant power load, and we derive an expression for the required capacitance.
\end{abstract}

\begin{IEEEkeywords}
Lyapunov methods, Network analysis and control, Power systems, Stability of nonlinear systems
\end{IEEEkeywords}

\section{Introduction}

%More than one billion people do not have electricity access \cite{iea15}, largely because of insufficient centralized power systems in developing countries. The need for electricity in remote and rural areas and the evolving demands on the existing bulk power infrastructure have driven extensive development of microgrids in recent years.

%\textcolor{red}{(We also need to acknolwedge Tata somewhere in this draft!)} KT: we can do it in the final proof.

\IEEEPARstart{M}{icrogrids} are smaller-scale than conventional power systems and can be designed to naturally incorporate distributed renewable resources. These benefits make microgrids attractive for addressing the lack of electricity in remote and rural areas, which continues to affect more than one billion people \cite{iea15}. The need for universal electricity access and evolving demands on existing bulk power infrastructure have driven extensive development of microgrids in recent years, but the capital-intensive planning process and the need for centralized control continue to impede adoption.

%However, designing and installing a microgrid to electrify an off-grid community typically requires specialized planning. The barrier to electricity access could be reduced through the interconnection of modular power sources and loads by non-specialist users to form an electrical network which would allow for decentralized electricity markets to serve previously off-grid communities.

%Broadly, the critical challenges of ad hoc microgrids are:

% \begin{enumerate}
%     \item the microgrid components (power sources, loads, and lines) should be designed so that \textit{any} network formed by connecting many units \textit{always} has an appropriate and stable equilibrium point,
%     \item the sources should be controlled in a coordinated and decentralized manner to manage power dispatch, and
%     \item the system should function entirely autonomously under significant uncertainty regarding the network configuration and power supply and demand.
% \end{enumerate}

%Conventional power systems, like those in developed countries, are centrally planned and managed by system operators. In remote and developing areas, however, the need for central oversight increases the costs of installing power systems and is often an insurmountable barrier to adoption.

\textit{Ad hoc microgrids}---microgrids that can be set up without predetermining the network structure---reduce the financial barrier to energy access. Instead, they pose a technical challenge: network stability must be certified before the network topology is known, and the network may be modified after installation depending on the community's needs. We focus on a previously-presented low voltage dc architecture designed specifically for rural electrification \cite{ugrid}.

Like all power systems, microgrids are not globally stable, which presents unique control challenges. Further, there are three unusual features of our analysis that require a significant departure from traditional power systems methods: (1) the ad hoc setting, (2) the use of tightly-regulated power electronics at all sources and loads, and (3) transient stability guarantees. Using power electronics to interface sources and loads to the network offer new opportunities for decentralized and autonomous control of power supply and demand \cite{cdc2016}, but they complicate system stability because they draw constant power from the network to regulate their outputs. The negative incremental resistance ($\partial v/\partial i$) of these loads has a destabilizing effect on power systems \cite{cplinstability}. The impact of constant power loads on the stability of conventional microgrids has attracted interest in the controls community \cite{Sanchez:2013gl,Bolognani:2015ek,SimpsonPorco:2015hp,Cezar:2015io,Barabanov:2016ki} and the power electronics community \cite{CriteriaReview,ROM}. Previous analyses are based on simplified models of resistive lines that remain stable under mild constraints and arbitrarily high control gains. However, in practical settings the line inductance is a source of instability and cannot be neglected.

To our knowledge, all previous studies have focused on known and fixed topologies, impractically simplified models, and/or linearized models. Our analysis encompasses unknown and changing topologies, realistic subsystem models, and significantly extends our previous conference papers \cite{cdc2016,compel} by providing transient stability guarantees for our nonlinear architecture. Robustness to large perturbations is important, especially in low-voltage, low-power networks where each household may be a significant fraction of the total system load. Mathematically, certifying transient stability requires characterizing the extrema \textit{and} attraction basins of our previously presented potential function for networks of unknown topologies. Our main contribution is a set of design-friendly constraints on individual network units (sources, loads, and lines), summarized by Eqs. \eqref{eq:CVtr}, \eqref{eq:Cbound} and \eqref{eq:Css}.

\section{Models and Notation}\label{sec:models}

In this section we present models for the interconnecting lines, power electronic loads, and voltage source converters which are analytically tractable and can be adapted to describe many networks. These are based on a previously presented ad hoc microgrid \cite{ugrid}.

%\subsection{Network Structure}

The electrical network is described as a weighted, directed graph $({\cal V,\cal E})$ with a total of  $|{\cal V}|$ nodes (buses) and $|{\cal E}|$ edges (lines). Each edge $\alpha\in\mathcal{E}$ represents a tuple $\alpha=(i,j)$ with $i,j\in\mathcal{V}$. A power source or load is attached to each node and we denote the subset of vertex indices corresponding to loads as $\mathcal{L} \subset {\cal V}$ with and the subset of source indices as $\mathcal{S} \subset {\cal V}$. The state of the system is described by the voltage and current vectors $v \in \mathbb{R}^{|\mathcal{V}|}$ and $i \in \mathbb{R}^{|\mathcal{E}|}$. The topology of the graph is defined by the (transposed) incidence matrix $\nabla \in \mathbb{R}^{|\mathcal{E}|\times |\mathcal{V}|}$. Applying $\nabla$ to the voltage vector results in a vector of potential differences across each line, and applying $\nabla^\top$ to the current vector yields the total current flowing out of each node.

% \begin{equation}\label{Eq:vdrop}
%     (\nabla v)_\alpha = v_{s_\alpha} - v_{t_\alpha}\,.
% \end{equation}

%\noindent
% \begin{equation}\label{Eq:isum}
%  (\nabla^\top i)_k = \sum_{\{\alpha:\,\, s_\alpha = k\}} i_\alpha - \sum_{\{\alpha:\,\, t_\alpha = k\}} i_\alpha\,.
% \end{equation}

%\noindent These relationships are illustrated in Fig. \ref{fig:models}.

%\subsection{Power Lines (Edges)}

Each power line is associated with a graph edge $\alpha = (i,j)\in\mathcal{E}$ and is characterized by an inductance $L_\alpha = L_{ij}$ and a resistance $R_\alpha = R_{ij}$. Each line has time constant $\tau_\alpha = L_\alpha/R_\alpha$. Each line current $i_{\alpha}$ is described by:
\begin{equation}\label{eq:Current_Dynamics}
    L_\alpha \dot i_\alpha = - R_\alpha i_\alpha + \sum_{k\in\mathcal{V}} \nabla_{\alpha k} v_k, \quad \alpha \in {\cal E}\,.
\end{equation}
%\subsection{Sources/Loads (Nodes)}

Load $k$ is represented by the parallel connection of a capacitance $C_k$ and a constant power load drawing power $p_k$. In general, constant power loads represent perfectly-regulated power converters with constant resistance loads, and hence are conservative and general models which can be used to describe many power electronic devices. The dynamics of these converters are much faster than the inductive response of the lines. The capacitor across the input of the power converter is a standard feature, and is critical for system stability \cite{Cezar:2015io}. Each load voltage is described by:

\begin{equation}\label{eq:Voltage_Dynamics}
    C_k \dot{v}_k = -\frac{p_k}{v_k} - \sum_{\alpha\in \mathcal{E}} \nabla_{\alpha k} i_\alpha, \quad k \in \mathcal{L}\,.
\end{equation}

In this work we assume the source controller, which regulates the converter output voltage, has dynamics much faster than the network. Accordingly, we model them as perfect voltage sources: $v_k = V_0,\quad k \in \mathcal{S}$. The extension to controllable converters with proportional (droop) and integral voltage control is relatively straightforward \cite{cdc2016}.

Characterizing system stability requires a suitable family of Lyapunov (potential) functions. Extrema of a particular potential correspond to equilibria of the system, and stability can be certified by demonstrating certain properties of the potential. Unfortunately, the presence of constant power loads and lack of global stability in power systems preclude the use of potential functions based on system energy (Hamiltonian potentials). However, the seminal results of Brayton and Moser are applicable to our setting \cite{Brayton:1964gr, Jeltsema:2009jd}. The Brayton Moser potential represents the system dynamics in a quasi-gradient form ($\mathcal{Q} \dot{x} = -\partial_x \mathcal{P}$), which is particularly useful for certification of transient stability. Whenever the matrix $\mathcal{Q}$ is positive definite in some region, the potential $\mathcal{P}$ is a non-increasing function: $\dot{\mathcal{P}} =  - \dot{x}^T \mathcal{Q}\dot{x}$. Lower and upper bounds on $\mathcal{P}$ that hold for arbitrary networks can be used to establish stability for ad hoc networks.

%For nonlinear electrical circuits, they demonstrate how to set up a potential $\mathcal{P}$ and a related matrix $\mathcal{Q}$ that provide a separate (and often more tractable) approach: conditions on $\mathcal{P}$ can be traded for conditions on $\mathcal{Q}$ by using the relation $\mathcal{Q} \dot{x} = -\partial_x \mathcal{P}$.

%This representation allows us to find a Lyapunov function which can be used to certify system stability.% the small-signal stability of an equilibrium point.

For our architecture, a representation with the proper structure (derived in \cite{cdc2016}) is given by:

%First, we define two potentials $\mathcal{G}$ and $\mathcal{P}$:
% \begin{align}\label{eq:Gdef}
%     \mathcal{G}(v) &= \sum_{(i,j)\in\mathcal{E}} \frac{(v_i -v_j)^2}{2 R_{ij}} +
%     %\frac{1}{2}(\nabla v)^T R^{-1}(\nabla v) +
%     \sum_{k\in\mathcal{L}}p_k\log v_k \\
%     \label{Pdef} \mathcal{P}(x) &=\begin{aligned}&\mathcal{G} (v) + \frac{1}{2} \sum_{\alpha \in\mathcal{E}} (\tau_{\mathrm{max}} -\tau_\alpha)L_\alpha\dot{i}_\alpha^2\\ \nonumber
%     &+ \frac{\tau_{\mathrm{max}}}{2}\sum_{k\in\mathcal{L}} C_k\dot{v}^2_k\end{aligned}
% \end{align}

\begin{align}
    \mathcal{G}(v) ={}& \sum_{(i,j)\in\mathcal{E}} \frac{(v_i -v_j)^2}{2 R_{ij}} + \sum_{k\in\mathcal{L}}p_k\log v_k \label{eq:Gdef} \\
    \begin{split}\label{Pdef}
    \mathcal{P}(x) ={}& \mathcal{G} (v) + \frac{1}{2} \sum_{\alpha \in\mathcal{E}} (\tau_{\mathrm{max}} -\tau_\alpha)L_\alpha\dot{i}_\alpha^2\\
    &\quad \quad + \frac{\tau_{\mathrm{max}}}{2}\sum_{k\in\mathcal{L}} C_k\dot{v}^2_k
    \end{split}
\end{align}
\begin{align}
    \label{eq:Qdef}
 \mathcal{Q} &= \begin{bmatrix}
\mathrm{diag}\left(\tau_{\mathrm{max}} R_\alpha - L_\alpha\right) & -\tau_{\mathrm{max}}\nabla_{\mathcal{E}\mathcal{L}} \\
\tau_{\mathrm{max}} \nabla_{\mathcal{E}\mathcal{L}}^T & \mathrm{diag}\left(C_k - \tau_{\mathrm{max}} \frac{p_k}{v_k^2}\right)
\end{bmatrix}
\end{align}
$x$ is the state vector $[i_\mathcal{E}^T, v_\mathcal{L}^T]^T$ and $\dot{i}_\alpha$, $\dot{v}_k$ are given in Eqs. \eqref{eq:Current_Dynamics} and \eqref{eq:Voltage_Dynamics}.  $\tau_{\max}$ is an upper bound on the line time constants in the network---for convenience, we define it to be strictly larger than the largest time constant: $\tau_{\max} > \max_\alpha \tau_\alpha$. $\nabla_{\mathcal{E}\mathcal{L}}$ refers to the submatrix of $\nabla$ corresponding to the load nodes. Finally, in addition to being notationally convenient, ${\cal G}$ is well-studied and is referred to in the literature as the resistive co-content \cite{Jeltsema:2009jd}. It is also worth noting, that all equilibria of the system correspond to extrema of $\mathcal{P}$, and vice-versa, every extremum of $\mathcal{P}$ to an equilibrium of the system. Moreover, the extrema of the potential $\mathcal{G}(v)$ correspond to the solutions of the equilibrium power flow equations.

\section{Stability of a Two Bus System}\label{sec:twobus}

In this section we consider the two bus system shown in Figure \ref{fig:LineDiag_2Bus}, to provide a simple introduction to the techniques we will use in the next section to analyze ad hoc networks with no topology constraints. % with all generation on the first bus with source voltage $v_1 = V_0$ and a load drawing power $p_2$ on the second bus at voltage $v_2$ connected by a line with resistance $R_{12}$. We consider two types of instability: small signal from small perturbations and transient, due to larger instantaneous changes in the load level.
%The dynamics of the two bus system follow directly from equations \ref{eq:Current_Dynamics} and \ref{eq:Voltage_Dynamics} and are given by
The dynamic equations of the system are given by:
\begin{align}
C \dot{v} &=-\frac{p}{v}+i\\
L \dot{i} &=-Ri+V_0-v\,.
\end{align}

%\begin{equation}
%\end{equation}
%where parameter definitions are given in Fig. \ref{fig:LineDiag_2Bus}.% for the sake of simplicity we have elimintated the subscript indices, so $v\in\mathbb{R}$ and $p\in\mathbb{R}$ refer to the voltage and load power on the bus $2$ and $i, R \in\mathbb{R}$ to the line current and resistance.

\begin{figure}[b!]
    \centering
    \includegraphics[width=0.8\linewidth]{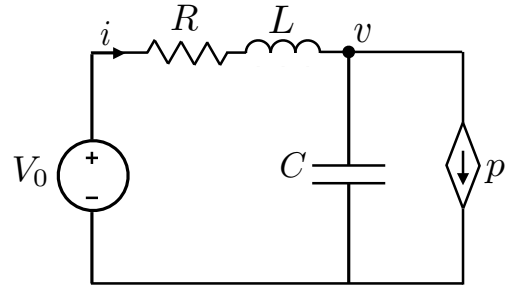}
    \caption{Description of two bus network}
    \label{fig:LineDiag_2Bus}
\end{figure}

%\subsection{Existence and feasibility of equilibria}\label{subsec:2buseq}
%This system has two equilibria, and the % which coincide with the extrema of the potential $\mathcal{G}$.
The relationship between the voltage $v$ at the load and the load magnitude $p$ at equilibrium (the ``nose curve'') is shown in Figure \ref{fig:NoseCurve}. The largest load that can be supported is the apex of the nose curve at $p = P_0 = V_0^2/4R$, which corresponds to a load bus voltage of $V_0/2$. For all $p < P_0$, there are two solutions:%, shown in Fig. \ref{fig:NoseCurve}:% that can be characterized by their per unit voltage levels $V_\mathrm{high}$ and $V_{\mathrm{low}}$, :

\begin{align}\label{eq:Vhigh}
    V_\mathrm{high}(p, R) &=\tfrac{V_0}{2} \left(1+\sqrt{1-\tfrac{p}{P_0(R)}}\right)\\\label{eq:Vlow} V_\mathrm{low}(p, R) &= \tfrac{V_0}{2}\left(1-\sqrt{1-\tfrac{p}{P_0(R)}}\right)
\end{align}
where $V_\mathrm{high}$ and $V_\mathrm{low}$ are the high and low equilibrium points. When the power exceeds $P_0$, $\mathcal{G}$ does not have any extrema and the system has no equilibria. For the network to achieve a minimum voltage of $V_0 > V_{\min} > V_0/2$, the largest load that can be supported is $P_{\max} = V_{\min}(V_0-V_{\min})/R$.

\begin{figure}
    \includegraphics[width = 0.8\linewidth]{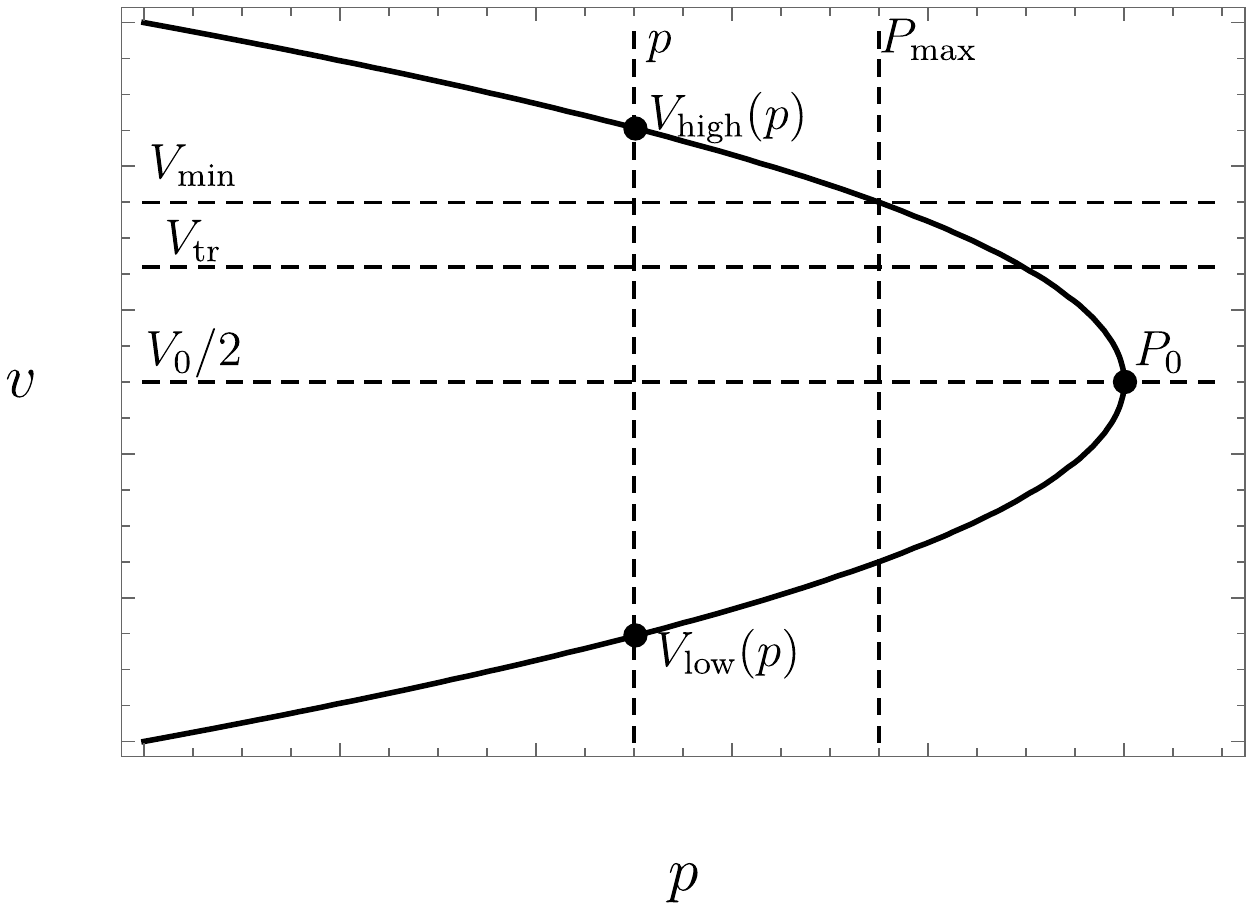}
    \caption{Nose curve demonstrating the relationship between power and load bus voltage}
    \label{fig:NoseCurve}
\end{figure}

%In order for the the load bus voltage to remain above $V_{\min}$, the power demanded must remain below $P_{\max}$. -- commented because we say this in the previous sentence (JAB)

%KAC:if we are submitting this to a controls journal, I don't necessarily know if we need this much of a description of efficiency but it is something to discuss

%The minimum voltage constraint can be restated in terms of the network distribution efficiency $\eta$, the ratio between the power consumed by the loads to the power supplied by the sources:%which can be derived as the amount of consumed and delivered powers:
% \begin{equation}
% \eta = \frac{\sum_{i \in \mathcal{L}} p_i}{\sum_{i \in \mathcal{S}}p_i} = 1 - \frac{\sum_{\alpha \in \mathcal{E}} r_\alpha i_\alpha^2}{\sum_{i \in \mathcal{S}}p_i}
% \end{equation}
% In the two-bus case $\eta = v/V_0$ and is bounded from below as $\eta = V_{\min}/V_0$. Operating at the loadability limit $v=V_0/2$ would indicate that $50\%$ of total power is dissipated in the lines. Typical power systems are design to maintain efficiencies of $90-95\%$.

%\subsection{Transient Stability}\label{sec:2bustransient}

To analyze transient stability, we first define a ``switching event'' to be any time $t$ such that the load power changes instantaneously:
\begin{align}
    p(t^-) \neq p(t^+)\,.
\end{align}
Hereafter, $^+$ and $^-$ refer to quantities evaluated at $t^+$ and $t^-$. All state variables are continuous: $i^-=i^+$ and $v^- = v^+$. We assume that the system is at equilibrium before the switching event: $v^- = V_\mathrm{high}^-$.

% move starting here
% After the switching event, the capacitor current is% contributing to the ``kinetic energy'' component of the BM potential is
% \begin{equation}
% C\dot{v}(t^+)=\frac{p^--p^+}{V^-_\mathrm{high}}\,,
% \end{equation}
% which yields
% %Therefore we find that in the two bus system, the BM potential right after the event is given by
% \begin{equation}
% \mathcal{P}^+=\mathcal{G}^+(V_\mathrm{high}^-)+\frac{\tau}{2C}\left ( \frac{p^--p^+}{V^-_\mathrm{high}} \right )^2\,.
% \end{equation}%given that $L\dot{i}(t^-)=L\dot{i}(t^+)=0$ due to the the system initially being at equilibrium and the continuity of the state variables.
% This representation of $\mathcal{P}$ makes explicit that larger values of $C$ decrease the overall system ``energy.``
%KAC: moved these equations based on the recommendation of reviewer
Applying the definitions of $\mathcal{G}$ and $\mathcal{P}$ from Sec. \ref{sec:models} yields:%, these can be rewritten as:
\begin{align}
\label{eq:G2Bus}
\mathcal{G}(v)&=\frac{(V_0-v)^2}{2 R} + p\log v\\
\label{eq:P2Bus}
\mathcal{P}(x)&=\mathcal{G}+ \frac{1}{2}(\tau_{\mathrm{max}} - \tau)L \dot{i}^2+\frac{\tau_{\mathrm{max}}}{2} C \dot{v}^2\,.
\end{align}
According to Lyapunov's theorem, convergence to a stable equilibrium point after a switching event is guaranteed whenever 1) the potential $\mathcal{P}$ is strictly decreasing with time:
\begin{equation}\label{eq:Pdot}
\dot{\mathcal{P}} = - R(\tau_{\mathrm{max}}-\tau)\dot{i}^2 -\left(C-\frac{\tau_{\mathrm{max}}p}{v^2}\right)\dot{v}^2 <0,
\end{equation}
and 2) the high voltage domain is invariant and contains a single equilibrium point
%2) \textcolor{red}{the potential at any point in the domain is greater than the potential of the equilibrium, namely} $\mathcal{P}(v)>\mathcal{P}(V_{\mathrm{high}})$ for \textcolor{red}{all $v$ in the domain.}
 and 3) the post switching potential evaluated at the pre-switching equilibrium, due to the continuity of state variables, is less than the potential of the bound on the voltage domain: $\mathcal{P}^+(V_{\mathrm{high}}^-) <   \mathcal{P}^+(V_\mathrm{tr})$.
%$v\neq V_\mathrm{high}$.

To satisfy the first requirement, we note that the first term of Equation \eqref{eq:Pdot} is negative by definition. Therefore, to ensure the second term is also negative, a lower bound on the load bus voltage during transients, denoted by $V_\mathrm{tr}$, is imposed such that $v\geq V_\mathrm{tr}$. The choice of $V_\mathrm{tr}$ provides a bound on $C$ such that
\begin{equation}
\label{eq:VtrBound}
    C>\frac{\tau P^{\max}}{V_\mathrm{tr}^2}
\end{equation}
%which then guarantees that the second term is negative, and ensures $\dot{\mathcal{P}}<0$ in the restricted voltage domain.

%such that matrix Q, given in equation \ref{eq:Qdef}, is positive definite in the restricted voltage domain.
To satisfy the second requirement, we must choose $V_\mathrm{tr}$ such that the stable equilibrium point is in the domain and the unstable point, in a two bus system $V_\mathrm{low}$, is not. Therefore the value of $V_\mathrm{tr}$ is bounded from above by $V_{\min}$, the lowest acceptable equilibrium point, and from below by $V_\mathrm{low}$. For simplicity, we impose a stricter lower bound of $V_0/2$. $V_\mathrm{tr}$ is thus bounded such that $V_{\min} \geq V_\mathrm{tr} \geq V_0/2$.
%and assume $V_\mathrm{tr} \geq V_0/2 \geq V_\mathrm{low}$.

The third requirement can be simplified given that $\mathcal{P}^+\left(V_\mathrm{tr}\right)$ is bounded from below by $\mathcal{G}^+\left(V_\mathrm{tr}\right)$ such that
%Therefore, for convergence to a stable equilibrium \textcolor{red}{the post switching potential must be less than the lower bound on the potential of the voltage boundary $V_\mathrm{tr}$}:
\begin{equation}\label{eq:transient}
\mathcal{P}^+(V_{\mathrm{high}}^-) <   \mathcal{G}^+(V_\mathrm{tr})
\end{equation}
%The upper bound on $\mathcal{P}^+$\textcolor{red}{ can be found directly from Equation \ref{eq:Voltage_Dynamics} given the continuity of the state variables} by characterizing the capacitor current as $C\dot{v}(t^+)=(p^--p^+)/V^-_\mathrm{high}$.
The transient component of $\mathcal{P}$ can be expressed in terms of the state variables by characterizing the capacitor current as $C\dot{v}(t^+)=(p^--p^+)/V^-_\mathrm{high}$.
%moved in line to save room!
%\begin{equation}
%C\dot{v}(t^+)=\frac{p^--p^+}{V^-_\mathrm{high}}\,,
%\end{equation}
which then yields
%Therefore we find that in the two bus system, the BM potential right after the event is given by
\begin{equation}
\mathcal{P}^+=\mathcal{G}^+(V_\mathrm{high}^-)+\frac{\tau}{2C}\left ( \frac{p^--p^+}{V^-_\mathrm{high}} \right )^2\,.
\end{equation}%given that $L\dot{i}(t^-)=L\dot{i}(t^+)=0$ due to the the system initially being at equilibrium and the continuity of the state variables.
This representation of $\mathcal{P}$ makes explicit that larger values of $C$ decrease the total potential. Therefore, the third stability condition yields a bound $C > C_{\mathrm{tr}}(p^-,p^+)$ where $C_{\mathrm{tr}}$ is given by:
\begin{equation}\label{eq:Climit}
 C_{\mathrm{tr}}(p^-,p^+ ) = \frac{\tau  }{2(\mathcal{G}^+(V_\mathrm{tr})-\mathcal{G}^+(V_\mathrm{high}^-))}\left(\frac{p^--p^+}{V_\mathrm{high}^-}\right)^2\nonumber
\end{equation}
This expression can be reduced to a minimum capacitance bound, which is analogous to Eq. \eqref{eq:VtrBound} and removes the direct dependence on $p^-$ and $p^+$, by characterizing the ``worst-case'' switching event $p^-\to p^+$
\begin{equation}\label{eq:2BusCap}
    \begin{aligned}
    C >
     %\underset{p^-,\ p^+}
     &\max_{p^-,\ p^+}
     C_{\mathrm{tr}}\left(p^-,p^+\right)\\
     & \text{subject to}
     & p^- \leq P_{\max}\\
     & & p^+ \leq P_{\max}\\
    \end{aligned}
\end{equation}
The sufficient condition on capacitance is given when both Eqs. \eqref{eq:VtrBound} and \eqref{eq:2BusCap} are satisfied.
This optimization problem can easily be solved computationally to provide a lower bound on the parallel load capacitance.

\section{Generalization to Networks}
In this section we generalize our analysis to arbitrary networks, in the absence of any restrictions on topology. We begin with a few simple assumptions.
First, we assume our network has one strongly-connected component with at least one source.
Second, all sources are assumed to be perfectly regulated with a voltage of $V_0$.
Third, we assume that the aggregate resistance of all lines is bounded from above by $\sum_{(i,j)\in\mathcal{E}} R_{ij}\leq R_{\max}$.
%This bound is used to define the natural unit of power $P_0 = V_0^2/4 R_{\max}$ corresponding to the maximum loadability of the two-bus topology.
Fourth, we define system loadability, $p_\Sigma$, as the instantaneous sum of constituent loads such that $p_\Sigma = \sum_{k\in\mathcal{L}} p_k$ and assume it is bounded from above by $P_{\max}$ : $p_\Sigma \leq P_{\max} < P_0 = V_0^2/4 R_{\max}$. The maximum system loadability is bounded by $P_0$ of the equivalent two-bus network.
Fifth, we assume a single lower bound across all buses for the equilibrium voltage and voltage during transients given by scalars $V_{\min}$ and $V_\mathrm{tr}$, respectively\footnote{While traditional design practices for known networks may benefit from using non-uniform voltage constraints, the consideration of an arbitrary network without knowledge of of specific loads lends itself toward utilizing consistent bounds across all buses.}.
Finally, the power consumption of each load is bounded from above by $p_k^{\max}$, $p_k \leq p_k^{\max}$, which can vary between loads.

The rest of the section is organized as follows. After brief review of the previously established conditions for existence of equilibrium (Lemma \ref{th:vlow}), we show that for any network the potential $\mathcal{P}$ can serve as a valid Lyapunov function (Corollary \ref{co:LowPbound} and Lemma \ref{CVtr}). We then establish the invariant sublevel set of $\mathcal{P}$ (Theorem \ref{thm:GLowerBound}) and estimate the value of $\mathcal{P}$ after an arbitrary switching event in Lemma \ref{SingleSwitchKE}. Combination of these results allows us to formulate the central condition \eqref{eq:Cbound} on the capacitance guaranteeing the stability of the system.

We start by introducing the high voltage equilibrium voltage profile $v^\mathrm{sep}$ and equilibrium state $x^\mathrm{sep} = [(i^{\mathrm{sep}})^T, (v^\mathrm{sep}_\mathcal{L})^T]^T$ where $i^{\mathrm{sep}} = \mathrm{diag}(R_k^{-1})\nabla v^\mathrm{sep}$. Existence and uniqueness of this equilibrium is guaranteed by the following Lemma \cite{simpson2016voltage}:

%In \ref{subsec:2buseq} we demonstrated that the two-bus system has exactly two equilibria. General networks have at least two equilibria, and we use key properties of the solutions to the resistive power flow equations to characterize these equilibria. We refer the reader to \cite{FloriansPapers} for more detailed discussion.

\begin{lemma}\label{th:vlow}
Whenever $p_\Sigma < P_0$, there exists exactly one solution to the power flow equations $\partial_v \mathcal{G} = 0$ with all load buses satisfying
    $v_k > V_\mathrm{high}(p_\Sigma, R_{\max})\,$.
%\begin{equation}\label{florian_fact_1}
%\end{equation}
At the same time, all other equilibria have at least one load bus $\kappa \in \mathcal{L}$ such that
%\begin{equation}\label{florian_fact_2}
    $v_\kappa < V_\mathrm{low}(p_\Sigma, R_{\max})\,.$
%\end{equation}
\end{lemma}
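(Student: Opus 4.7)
The plan is to reduce the multi-node equilibrium problem to the scalar ``nose curve'' analysis of Section \ref{sec:twobus}, following the approach of \cite{simpson2016voltage}. The key observation is that $V_\mathrm{high}(p_\Sigma, R_{\max})$ and $V_\mathrm{low}(p_\Sigma, R_{\max})$ are the two equilibria of a virtual two-bus surrogate in which all lines are collapsed into a single resistor of value $R_{\max}$ supplying the aggregate load $p_\Sigma$. Because $R_{\max}$ upper-bounds the sum of line resistances in the real network, voltage deviations in the real network should be controllable in terms of this surrogate.

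For existence and uniqueness of the high-voltage equilibrium I would study $\mathcal{G}$ on the closed domain $\Omega_H = \{v : v_k \geq V_\mathrm{high}(p_\Sigma, R_{\max}) \text{ for all } k \in \mathcal{L}\}$. The load-block Hessian of $\mathcal{G}$ is a strictly diagonally dominant Laplacian-plus-source-boundary matrix perturbed by $-\mathrm{diag}(p_k/v_k^2)$. For $v \in \Omega_H$ the negative perturbation is bounded by $4 p_\Sigma / V_0^2$ per bus, and I would show that the Laplacian part dominates, yielding strict convexity of $\mathcal{G}$ on $\Omega_H$. Combined with coercivity of the quadratic resistive term as any coordinate grows large, this produces a unique interior minimizer, which is the sought equilibrium. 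Strict convexity also rules out any other equilibrium inside $\Omega_H$.

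For the second statement I would argue by contradiction: suppose an equilibrium $v^*$ distinct from the one above satisfies $v_k^* \geq V_\mathrm{low}$ at every load bus. Let $\kappa = \arg\min_k v_k^*$. Summing the load balance equations weighted by $v_k^*$ gives the power-balance identity $\sum_\alpha R_\alpha i_\alpha^2 + p_\Sigma = \sum_{s \in \mathcal{S}} V_0 i_s$, while Ohm's law along any source-to-$\kappa$ path gives $V_0 - v_\kappa^* \leq \sum_{\alpha \in \mathrm{path}} R_\alpha |i_\alpha|$. Combining these via Cauchy--Schwarz and $\sum_{(i,j)\in\mathcal{E}} R_{ij} \leq R_{\max}$ should yield the scalar nose-curve inequality $R_{\max} p_\Sigma \geq v_\kappa^*(V_0 - v_\kappa^*)$, whose solution set in $(0, V_0)$ is $(0, V_\mathrm{low}] \cup [V_\mathrm{high}, V_0)$. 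The already-proven uniqueness on $\Omega_H$ rules out $v_\kappa^* \geq V_\mathrm{high}$, forcing $v_\kappa^* < V_\mathrm{low}$.

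The main obstacle is making the path-integration argument rigorous for arbitrary unknown topologies. Each load bus couples to every other through the resistive network, and the worst-case propagation of voltage drops depends on topology. Showing that the coarse bound $\sum_{(i,j)\in\mathcal{E}} R_{ij} \leq R_{\max}$ suffices, rather than a finer effective-resistance or spectral quantity, requires the careful variational argument of \cite{simpson2016voltage}; this is the bulk of the technical work and the reason the lemma is invoked by citation rather than reproved from scratch here.
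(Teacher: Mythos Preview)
The paper does not actually prove this lemma: its entire proof reads ``See \cite{simpson2016voltage}, Supplementary Theorem 1.'' You correctly anticipate this in your final paragraph, so in that sense your proposal and the paper are aligned --- both defer the real work to the cited reference.

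That said, your sketch goes further than the paper and it is worth flagging where it is incomplete. For the existence/uniqueness half, strict convexity of $\mathcal{G}$ on $\Omega_H$ together with coercivity at infinity gives a unique minimizer of $\mathcal{G}$ on the \emph{closed} set $\Omega_H$, but not automatically in its interior; you still need to rule out the minimizer sitting on the face $\{v_\kappa = V_\mathrm{high}(p_\Sigma,R_{\max})\}$, e.g.\ by checking the sign of $\partial_{v_\kappa}\mathcal{G}$ there. (Note also that the convexity you need is essentially the content of Theorem~\ref{th:hessian} in this paper, which only requires $v_k > V_0/2$, a weaker hypothesis than membership in $\Omega_H$.) For the low-voltage half, the passage from the global power-balance identity and the path voltage-drop estimate to the scalar inequality $R_{\max} p_\Sigma \geq v_\kappa^*(V_0 - v_\kappa^*)$ via Cauchy--Schwarz is not obviously valid: the edge currents appearing in the two relations are not the same quantities, and the minimum-voltage bus $\kappa$ need not sit at the end of a path carrying total current $p_\Sigma/v_\kappa^*$. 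This is precisely the topology-dependent coupling you identify as the main obstacle, and the argument in \cite{simpson2016voltage} handles it by a different route (a fixed-point/contraction analysis rather than a direct energy inequality).
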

\begin{proof}
%textcolor doesn't work in the proof environment but I altered text here KAC
See \cite{simpson2016voltage}, Supplementary Theorem 1.
\end{proof}

The quantities $V_\textrm{low}$ and $V_\textrm{high}$ are scalar functions defined in Equations \eqref{eq:Vhigh} and \eqref{eq:Vlow}.
%These bounds represent the high and low voltage solutions of a two bus system with equivalent loading and line resistance.
The bounds established in Lemma \ref{th:vlow} demonstrate that exactly one feasible equilibrium point is guaranteed to exist for any $p_\Sigma \leq P_{\max}$ if and only if $R_{\max}$, $P_{\max}$, $V_{\min}$, and $V_0$ satisfy $V_{\mathrm{high}} (P_{\max}, R_{\max}) \geq V_{\min}$ with $V_\mathrm{high}$ given by Eq. \eqref{eq:Vhigh}. This condition is equivalent to
\begin{equation}\label{eq:Pmax_arbitrary}
 P_{\max} \leq \frac{V_{\min}(V_0-V_{\min})}{R_{\max}}\,.
\end{equation}
Both inequalities in Lemma \ref{th:vlow} become tight for the two-bus system---in this sense, the two bus topology (one source separated from multiple loads with total power $p_{\Sigma}$ by a line of resistance $R_{\max}$) is the ``worst-case'' topology for equilibrium point feasibility. This also implies that condition \eqref{eq:Pmax_arbitrary} is both necessary and sufficient for existence of a feasible equilibrium in an ad hoc setting.

\begin{theorem}\label{th:hessian}
The function $\mathcal{G}(v)$ is strictly convex
%$\partial_{vv} \mathcal{G}(v)$ %of the potential $\mathcal{G}$
whenever all load voltages satisfy $v_k > V_0/2$ and $p_\Sigma < P_0$\,.
\end{theorem}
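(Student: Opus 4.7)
The plan is to compute the Hessian of $\mathcal{G}$ with respect to the load voltages (source voltages are clamped at $V_0$, so they are not variables) and show that it is a positive definite matrix under the stated hypotheses. A direct calculation gives
\begin{equation*}
\nabla^2_{v_\mathcal{L}} \mathcal{G}(v) \;=\; \mathcal{W}_{\mathcal{L}\mathcal{L}} \;-\; \mathrm{diag}\!\left(\tfrac{p_k}{v_k^2}\right),
\end{equation*}
where $\mathcal{W} = \nabla^\top \mathrm{diag}(1/R_\alpha)\,\nabla$ is the conductance-weighted Laplacian and $\mathcal{W}_{\mathcal{L}\mathcal{L}}$ is its principal submatrix indexed by load nodes (i.e., the ``grounded Laplacian''). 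So strict convexity is equivalent to the Schur-type bound $u^\top \mathcal{W}_{\mathcal{L}\mathcal{L}} u > \sum_k (p_k/v_k^2) u_k^2$ for all nonzero $u \in \mathbb{R}^{|\mathcal{L}|}$.

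The easy upper bound on the load term comes straight from the hypothesis $v_k > V_0/2$: each diagonal entry satisfies $p_k/v_k^2 < 4p_k/V_0^2$, so
\begin{equation*}
\sum_{k \in \mathcal{L}} \tfrac{p_k}{v_k^2} u_k^2 \;<\; \tfrac{4}{V_0^2} \sum_{k} p_k u_k^2 \;\leq\; \tfrac{4 p_\Sigma}{V_0^2}\,\|u\|_\infty^2.
\end{equation*}
The technical core is then a matching lower bound on $u^\top \mathcal{W}_{\mathcal{L}\mathcal{L}} u$ in terms of $\|u\|_\infty^2$. Extending $u$ to all of $\mathcal{V}$ by setting $\tilde u_k = 0$ at source nodes, we have $u^\top \mathcal{W}_{\mathcal{L}\mathcal{L}} u = \sum_{(i,j)\in\mathcal{E}} (\tilde u_i - \tilde u_j)^2/R_{ij}$. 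For the load node $k^\star$ attaining $\|u\|_\infty$, pick any simple path in $\mathcal{E}$ from $k^\star$ to a source (this exists by the strong-connectedness assumption and the existence of at least one source). Writing $|u_{k^\star}|$ as a telescoping sum along the path and applying Cauchy--Schwarz with weights $R_\alpha$ versus $1/R_\alpha$ gives
\begin{equation*}
\|u\|_\infty^2 \;\leq\; \Big(\sum_{\alpha \in \mathrm{path}} R_\alpha\Big) \Big(\sum_{\alpha \in \mathrm{path}} \tfrac{(\tilde u_i - \tilde u_j)^2}{R_\alpha}\Big) \;\leq\; R_{\max}\, u^\top \mathcal{W}_{\mathcal{L}\mathcal{L}} u,
\end{equation*}
where the last step uses the global resistance budget $\sum_\alpha R_\alpha \leq R_{\max}$ and that the path sum is dominated by the full edge sum.

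Chaining the two bounds yields $\sum_k (p_k/v_k^2) u_k^2 < (4 p_\Sigma R_{\max}/V_0^2)\, u^\top \mathcal{W}_{\mathcal{L}\mathcal{L}} u$, and the hypothesis $p_\Sigma < P_0 = V_0^2/(4R_{\max})$ makes the prefactor strictly less than one, establishing $\nabla^2_{v_\mathcal{L}} \mathcal{G} \succ 0$. I expect the main obstacle (and the step that actually does the work) to be the Cauchy--Schwarz path bound: it is where both the topology-independent assumption on $R_{\max}$ and the connectivity of the graph enter, and its tightness on the two-bus system is exactly what makes the constant $P_0 = V_0^2/(4 R_{\max})$ appear naturally and match the equilibrium-existence condition of Lemma~\ref{th:vlow}.
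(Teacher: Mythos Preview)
Your argument is correct and follows essentially the same route as the paper: compute the Hessian as the grounded Laplacian minus the diagonal load term, then use a telescoping path to a source together with Cauchy--Schwarz (the paper phrases it as Jensen) to control the load term by the Laplacian quadratic form, and finish with $v_k>V_0/2$ and $p_\Sigma<P_0$.

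The only tactical difference is that the paper bounds \emph{each} $w_k^2$ by its own path $\Pi_k$ and then reorganizes the resulting double sum edge-by-edge, invoking $\sum_{k:(i,j)\in\Pi_k} p_k \leq p_\Sigma$; you instead pass through $\|u\|_\infty$ once and use a single path from the maximizing node. Your variant is a mild streamlining---it avoids the edgewise regrouping---while the paper's per-node version would give a slightly sharper edgewise inequality if one ever wanted local (per-line) information; for the stated theorem both yield the same constant $4p_\Sigma R_{\max}/V_0^2<1$.
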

\begin{proof}
The quadratic form of the Hessian is given by
\begin{align}\label{hessform}
 w^T \partial_{vv} \mathcal{G}(v) w = \sum_{(i,j)\in \mathcal{E}}\frac{(w_i-w_j)^2}{R_{ij}} - \sum_{k \in \mathcal{L}} \frac{p_k}{v_k^2}w_k^2,
\end{align}
where we formally define $w_k = 0$ whenever $k \in \mathcal{S}$. For $k\in\mathcal{L}$, let $\Pi_k$ be a path connecting bus $k$ to one of the sources.
%let $\Pi_k \subset \mathcal{E}$ with $k\in\mathcal{L}$ connecting each bus $k$ to one of the sources.
Then $w_k = \sum_{(i,j)\in \Pi_k}(w_i-w_j)$. Define $R_{\Pi_k} = \sum_{(i,j)\in \Pi_k} R_{ij}$. The term $w_k^2$ in \eqref{hessform} can be then bounded with the help of Jensen's inequality as
\begin{align}
 w_k^2 &= \left(\sum_{(i,j)\in \Pi_k}(w_i-w_j)\right)^2 \nonumber\\
 &= R_{\Pi_k}^2 \left(\sum_{(i,j)\in \Pi_k}\frac{R_{ij}}{R_{\Pi_k}}\frac{w_i-w_j}{R_{ij}}\right)^2 \nonumber\\
 &\leq R_{\Pi_k}^2  \sum_{(i,j)\in \Pi_k}\frac{R_{ij}}{R_{\Pi_k}}\left(\frac{w_i-w_j}{R_{ij}}\right)^2\nonumber \\
 &\leq R_{\max} \sum_{(i,j) \in \Pi_k} \frac{(w_i-w_j)^2}{R_{ij}}\nonumber
\end{align}
Hence, for the Hessian quadratic form we have
%\textcolor{red}{
\begin{align}
w^T \partial_{vv} \mathcal{G}(v) w
&> \sum_{(i,j)\in \mathcal{E}}\frac{(w_i-w_j)^2}{R_{ij}} \left(1 - \sum_{k : (i,j)\in \Pi_k} \frac{p_k R_{\max}}{v_k^2}\right)\nonumber \\
 &>\sum_{(i,j)\in \mathcal{E}}\frac{(w_i-w_j)^2}{R_{ij}} \left(1 - \frac{4 p_\Sigma R_{\max}}{V_{0}^2}\right) > 0 \nonumber
\end{align}
\end{proof}
\begin{corollary} \label{co:LowPbound}
The voltage profile $v^\mathrm{sep}$ minimizes the function $\mathcal{G}$ in the domain $v_{k} > V_0/2$ for $k \in \mathcal{L}$. Furthermore in the same domain, and for arbitrary currents, we obtain $\mathcal{P}(x) > \mathcal{G}^\mathrm{sep} = \mathcal{G}(v^\mathrm{sep})$ whenever $x \neq x^{\mathrm{sep}}$ and $\mathcal{P}(x) = \mathcal{G}^\mathrm{sep}$ for $x = x^{\mathrm{sep}}$.
% $v = v^\mathrm{sep}$ and $i = \mathrm{diag}(R_k^{-1})\nabla v^\mathrm{sep}$.
\end{corollary}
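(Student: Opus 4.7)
The plan is to derive both claims essentially as direct consequences of Theorem \ref{th:hessian} combined with Lemma \ref{th:vlow}. Theorem \ref{th:hessian} supplies strict convexity of $\mathcal{G}$ on the open convex set $\mathcal{D} = \{v : v_k > V_0/2 \text{ for all } k \in \mathcal{L}\}$, while Lemma \ref{th:vlow} supplies a distinguished critical point of $\mathcal{G}$ inside $\mathcal{D}$; together they force uniqueness of the minimizer in $\mathcal{D}$. The extension from $\mathcal{G}$ to $\mathcal{P}$ will follow because the additional terms appearing in \eqref{Pdef} are manifestly non-negative under the standing assumption $\tau_{\mathrm{max}} > \tau_\alpha$.

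First I would verify that $v^\mathrm{sep}$ lies in $\mathcal{D}$. Lemma \ref{th:vlow} guarantees $v_k^\mathrm{sep} > V_\mathrm{high}(p_\Sigma, R_{\max})$ for every $k \in \mathcal{L}$, and inspection of \eqref{eq:Vhigh} shows that $p_\Sigma < P_0$ gives $V_\mathrm{high}(p_\Sigma, R_{\max}) = \tfrac{V_0}{2}(1+\sqrt{1-p_\Sigma/P_0}) > V_0/2$. Thus $v^\mathrm{sep} \in \mathcal{D}$. Since $v^\mathrm{sep}$ satisfies $\partial_v \mathcal{G}(v^\mathrm{sep}) = 0$ by definition of the equilibrium and $\mathcal{G}$ is strictly convex on the convex set $\mathcal{D}$ by Theorem \ref{th:hessian}, standard convex analysis implies $v^\mathrm{sep}$ is the unique minimizer of $\mathcal{G}$ over $\mathcal{D}$. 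Hence $\mathcal{G}(v) > \mathcal{G}^\mathrm{sep}$ for every $v \in \mathcal{D}$ with $v \neq v^\mathrm{sep}$, which is the first claim.

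Next I would address $\mathcal{P}(x)$. By \eqref{Pdef} and the assumption $\tau_{\mathrm{max}} > \tau_\alpha$, the two sums involving $\dot{i}_\alpha^2$ and $\dot{v}_k^2$ contribute non-negatively, so $\mathcal{P}(x) \geq \mathcal{G}(v) \geq \mathcal{G}^\mathrm{sep}$ whenever $v \in \mathcal{D}$, with equality throughout requiring both $v = v^\mathrm{sep}$ and every $\dot{i}_\alpha$, $\dot{v}_k$ equal to zero. To close the equality case, I would substitute $v = v^\mathrm{sep}$ into \eqref{eq:Current_Dynamics} and use $\dot{i}_\alpha = 0$ to extract $i_\alpha = R_\alpha^{-1}(\nabla v^\mathrm{sep})_\alpha$, which is exactly the component of $i^\mathrm{sep}$; the vanishing of each $\dot{v}_k$ in \eqref{eq:Voltage_Dynamics} is then automatic from the power-balance equations defining $v^\mathrm{sep}$. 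Therefore $\mathcal{P}(x) = \mathcal{G}^\mathrm{sep}$ forces $x = x^\mathrm{sep}$, and conversely $\mathcal{P}(x^\mathrm{sep}) = \mathcal{G}^\mathrm{sep}$.

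The argument is largely bookkeeping; the only place meriting real care is this equality analysis, because $\dot{i}_\alpha$ and $\dot{v}_k$ are not independent coordinates but functions of $x = (i_\mathcal{E}, v_\mathcal{L})$ via \eqref{eq:Current_Dynamics}--\eqref{eq:Voltage_Dynamics}. Showing that simultaneous vanishing of all these state-dependent quantities, together with $v = v^\mathrm{sep}$, uniquely pins down $i = i^\mathrm{sep}$ is the mildest obstacle, and it is handled by the single algebraic substitution above.
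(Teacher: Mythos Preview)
Your proposal is correct and is exactly the argument the paper has in mind: the corollary is stated immediately after Theorem~\ref{th:hessian} with no separate proof, so the intended derivation is precisely strict convexity of $\mathcal{G}$ on $\mathcal{D}$ together with the critical point supplied by Lemma~\ref{th:vlow}, followed by the observation that the extra terms in \eqref{Pdef} are non-negative. Your treatment of the equality case---using $\dot{i}_\alpha=0$ with $v=v^\mathrm{sep}$ in \eqref{eq:Current_Dynamics} to recover $i=i^\mathrm{sep}$---is the right way to make explicit what the paper leaves unsaid.
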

Next, we identify the conditions for the decay of the Lyapunov function $\mathcal{P}$ in the transiently acceptable domain of $\mathcal{T} = \{x:\, v_k > V_\mathrm{tr} > V_0/2\}$.
\begin{lemma}
\label{CVtr}
Whenever the capacitances on all the load buses satisfy
\begin{equation} \label{eq:CVtr}
    C_k >\frac{\tau p_k^{\max}}{V_\mathrm{tr}^2},
\end{equation}
one has $\dot{\mathcal{P}} < 0$ whenever $x\in \mathcal{T} \setminus \{x^\mathrm{sep}\}$.
\end{lemma}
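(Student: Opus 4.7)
The plan is to plug the dynamics into the quasi-gradient identity $\dot{\mathcal{P}} = -\dot x^T \mathcal{Q} \dot x$ and check that the quadratic form on the right is strictly positive on $\mathcal{T}\setminus\{x^{\mathrm{sep}}\}$, with the sign condition forced exactly by \eqref{eq:CVtr}. The matrix $\mathcal{Q}$ in \eqref{eq:Qdef} has antisymmetric off-diagonal blocks $\pm\tau_{\max}\nabla_{\mathcal{E}\mathcal{L}}$, so when I form $\dot x^T\mathcal{Q}\dot x$ the cross terms $\dot i^T(-\tau_{\max}\nabla_{\mathcal{E}\mathcal{L}})\dot v_\mathcal{L}$ and $\dot v_\mathcal{L}^T(\tau_{\max}\nabla_{\mathcal{E}\mathcal{L}}^T)\dot i$ cancel. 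This leaves a purely diagonal expression
\begin{equation*}
\dot x^T\mathcal{Q}\dot x=\sum_{\alpha\in\mathcal{E}}(\tau_{\max}R_\alpha-L_\alpha)\dot i_\alpha^2+\sum_{k\in\mathcal{L}}\left(C_k-\frac{\tau_{\max}p_k}{v_k^2}\right)\dot v_k^2,
\end{equation*}
which is the natural network analogue of \eqref{eq:Pdot}.

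Next I would bound each diagonal coefficient. For the current terms, $\tau_{\max}R_\alpha-L_\alpha=R_\alpha(\tau_{\max}-\tau_\alpha)>0$ by the standing convention $\tau_{\max}>\max_\alpha\tau_\alpha$, so the first sum is non-negative and strictly positive whenever $\dot i\neq 0$. For the voltage terms, on the transient domain $\mathcal{T}$ every load bus satisfies $v_k>V_{\mathrm{tr}}$, hence $\tau_{\max}p_k/v_k^2<\tau_{\max}p_k^{\max}/V_{\mathrm{tr}}^2$, and condition \eqref{eq:CVtr} then yields $C_k-\tau_{\max}p_k/v_k^2>0$. (If the statement's $\tau$ is meant to be the uniform upper bound $\tau_{\max}$, this matches \eqref{eq:CVtr} directly; otherwise one simply replaces $\tau$ by $\tau_{\max}$ throughout.) Consequently $\dot x^T\mathcal{Q}\dot x\ge 0$ on all of $\mathcal{T}$, with equality if and only if $\dot x=0$.

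Finally, I would rule out the equality case outside $x^{\mathrm{sep}}$. The condition $\dot x=0$ is exactly the equilibrium condition for \eqref{eq:Current_Dynamics}--\eqref{eq:Voltage_Dynamics}, i.e.\ $\partial_x\mathcal{P}=0$, which by the comment after \eqref{eq:Qdef} forces $v$ to be a solution of the power flow equations $\partial_v\mathcal{G}=0$. Since $\mathcal{T}\subset\{v:\,v_k>V_0/2\}$, Lemma \ref{th:vlow} (together with Corollary \ref{co:LowPbound}) tells me that the only such equilibrium lying in this region is $x^{\mathrm{sep}}$. Therefore for any $x\in\mathcal{T}\setminus\{x^{\mathrm{sep}}\}$ we have $\dot x\neq 0$, so $\dot{\mathcal{P}}=-\dot x^T\mathcal{Q}\dot x<0$.

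The reasoning is largely mechanical once the Brayton--Moser representation is in place; the only real subtlety is the passage from $\dot{\mathcal{P}}\le 0$ to strict inequality. I expect that to be the main point to argue carefully, because it is not a consequence of positive-definiteness of $\mathcal{Q}$ alone (the first sum vanishes when $\dot i=0$, and the second when $\dot v=0$), and one must invoke the uniqueness of the high-voltage equilibrium from Lemma \ref{th:vlow} to conclude that simultaneous vanishing of $\dot i$ and $\dot v$ can happen only at $x^{\mathrm{sep}}$.
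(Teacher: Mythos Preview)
Your argument is correct and follows the same route as the paper: use $\dot{\mathcal{P}}=-\dot x^{T}\mathcal{Q}\dot x$ and check that the symmetric part of $\mathcal{Q}$ (which is diagonal, since the off-diagonal blocks are skew) has strictly positive entries on $\mathcal{T}$ under \eqref{eq:CVtr}. The paper's proof is just the one-line version of this, invoking ``positive definiteness of $\mathcal{Q}$'' without spelling out the cancellation of the cross terms or the strict-inequality step; your write-up supplies exactly those details, including the observation that $\dot x=0$ forces an equilibrium and hence, by Lemma~\ref{th:vlow}, the point $x^{\mathrm{sep}}$ inside $\mathcal{T}$.
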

\begin{proof}
This result follows directly from the relation $\dot{\mathcal{P}}  = -\dot{x}^T\mathcal{Q}\dot{x}$ and positive definiteness of  the matrix $\mathcal{Q}$ as defined in equation \eqref{eq:Qdef}.
\end{proof}

These two results imply that any sublevel set of $\mathcal{P}$ inside $\mathcal{T}$ is invariant and any trajectory starting inside such a sublevel set converges to the equilibrium point $x^\mathrm{sep}$. These sublevel sets are compact as the function $\mathcal{P}$ is bounded from below by a convex $\mathcal{G}$. To estimate the largest sublevel set of $\mathcal{P}$ that is contained in the transient domain $\mathcal{T}$ we prove the following theorem.
\begin{theorem}
\label{thm:GLowerBound}
The function $\mathcal{P}$ evaluated at the boundary $\partial\mathcal{T}$ is bounded from below such that $\mathcal{P}(x) \geq \mathcal{G}^+_\mathrm{tr}$ where
\begin{equation}\label{eq:vuep_bound}
     \mathcal{G}^+_\mathrm{tr} = \frac{(V_\mathrm{tr} - V_0)^2}{2 R_{\max}} + p_\Sigma^{+} \log V_\mathrm{tr}
\end{equation}
\end{theorem}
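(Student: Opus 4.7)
The plan is to drop the kinetic contributions from $\mathcal{P}$ and reduce the claim to a purely algebraic bound on the resistive co-content $\mathcal{G}$. Since $\tau_{\max}>\tau_\alpha$ by definition, the terms $\tfrac{1}{2}\sum_\alpha(\tau_{\max}-\tau_\alpha)L_\alpha \dot i_\alpha^2$ and $\tfrac{\tau_{\max}}{2}\sum_k C_k \dot v_k^2$ in \eqref{Pdef} are both non-negative, so $\mathcal{P}(x)\geq \mathcal{G}(v)$ pointwise. It therefore suffices to show $\mathcal{G}(v)\geq \mathcal{G}^+_\mathrm{tr}$ for every $v$ on $\partial\mathcal{T}$, i.e.\ every $v$ satisfying $v_k\geq V_\mathrm{tr}$ for all $k\in\mathcal{L}$ with equality for at least one load bus $k^\star$.

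The logarithmic part of $\mathcal{G}$ is immediate: using $p_k^+\geq 0$ and monotonicity of $\log$, each term obeys $p_k^+ \log v_k \geq p_k^+ \log V_\mathrm{tr}$, and summing gives $\sum_{k\in\mathcal{L}} p_k^+ \log v_k \geq p_\Sigma^+ \log V_\mathrm{tr}$. For the resistive part I invoke the strong-connectivity assumption and select a simple path $\Pi_{k^\star}$ of edges from $k^\star$ to some source bus (at voltage $V_0$). Telescoping potential differences along this path gives $\sum_{(i,j)\in\Pi_{k^\star}}(v_i-v_j) = V_0 - V_\mathrm{tr}$, up to a sign that disappears after squaring. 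Applying the same Cauchy--Schwarz/Jensen step already used in Theorem \ref{th:hessian},
\begin{equation*}
(V_0-V_\mathrm{tr})^2 \;\leq\; \Bigl(\sum_{(i,j)\in\Pi_{k^\star}} R_{ij}\Bigr)\sum_{(i,j)\in\Pi_{k^\star}}\frac{(v_i-v_j)^2}{R_{ij}} \;\leq\; R_{\max}\sum_{(i,j)\in\Pi_{k^\star}}\frac{(v_i-v_j)^2}{R_{ij}},
\end{equation*}
where the last inequality uses the aggregate resistance bound $\sum_{(i,j)\in\mathcal{E}} R_{ij}\leq R_{\max}$ together with the fact that a simple path traverses each edge at most once. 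Dropping the remaining non-negative edge terms outside $\Pi_{k^\star}$ and reinserting the factor $1/2$ yields $\sum_{(i,j)\in\mathcal{E}}\tfrac{(v_i-v_j)^2}{2 R_{ij}} \geq \tfrac{(V_\mathrm{tr}-V_0)^2}{2 R_{\max}}$, and combining with the logarithmic bound reproduces exactly $\mathcal{G}^+_\mathrm{tr}$.

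The only non-routine step is the path-based Cauchy--Schwarz bound, and it is essentially a re-use of the Jensen argument already deployed in Theorem \ref{th:hessian}, so I expect no genuine obstacle. The remaining ingredients---existence of a path (from strong connectivity), non-negativity of every $p_k^+$ (from the definition of $\mathcal{L}$), and non-negativity of the kinetic terms (from $\tau_{\max}>\tau_\alpha$)---are all built into the standing assumptions, and the bound is uniform over the boundary because the path choice only needs the worst-case bus $k^\star$.
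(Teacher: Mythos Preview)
Your proposal is correct and follows essentially the same route as the paper: reduce $\mathcal{P}\geq\mathcal{G}$ via non-negativity of the kinetic terms, bound the logarithmic sum using $v_k\geq V_\mathrm{tr}$, and bound the resistive sum along a source-to-$k^\star$ path via the same Cauchy--Schwarz/Jensen telescoping argument from Theorem~\ref{th:hessian} together with $R_\Pi\leq R_{\max}$. The only cosmetic difference is that you spell out the Cauchy--Schwarz form explicitly whereas the paper presents it as a chain of inequalities, but the content is identical.
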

\begin{proof}
Given $\mathcal{P} \geq \mathcal{G}$, it is sufficient to establish the bound on $\mathcal{G}$. Given that $x \in\partial\mathcal{T}$, there exists a load bus $\kappa$ with $v_\kappa = V_\mathrm{tr}$. Consider a path $\Pi \subset \mathcal{E}$ connecting the bus $\kappa$ to some source in the system and define $R_\Pi = \sum_{(i,j)\in \Pi} R_{ij}$
Applying the same Jensen's inequality approach as in Theorem \ref{th:hessian}, we show that the potential $\mathcal{G}$ satisfies
\begin{align}
	\mathcal{G}(v) &= \sum_{(i,j)\in \mathcal{E}}\frac{(v_i - v_j)^2}{2 R_{ij}} +\sum_{i\in \mathcal{L}} p_i \log v_i\nonumber \\
    &\geq \sum_{(i,j)\in \Pi}\frac{(v_i - v_j)^2}{2 R_{ij}} + \sum_i p_i \log V_\mathrm{tr} \nonumber\\
    &\geq \frac{(v_\kappa-V_0)^2}{2 R_\Pi} + p_\Sigma \log V_\mathrm{tr} \nonumber\\
    &\geq \frac{(V_\mathrm{tr} - V_0)^2}{2 R_{\max}} + p_\Sigma \log V_\mathrm{tr}\nonumber
\end{align}
%Note that this inequality becomes tight for the $2$-bus system.
\end{proof}

To certify transient stability of the system after the switching event, we estimate the corresponding value of the Lyapunov function $\mathcal{P}$. Specifically we consider an event when only one of the loads $\kappa\in\mathcal{L}$ experiences switching, changing its power from $p_\kappa^-$ to $p_\kappa^+$. %\textcolor{red}{This assumption is realistic as the time scale of the transients is typically on the order of ADD TO THIS} %KT I think we can just respond in the rebuttal letter, no need to justify it here.
%In the single-switching case, the following result bounds the value of the effective kinetic energy immediately after the switching event:

\begin{lemma}
\label{losses_equiv}
At any equilibrium point, the potential $\mathcal{G}$ can be represented as
\begin{equation}
\mathcal{G}(v)=\sum_{i \in \mathcal{L}} \left[ \frac{p_i}{v_i} \frac{(V_0 - v_i)}{2} + p_i \log(v_i) \right ]
\label{G_losses}
\end{equation}

\end{lemma}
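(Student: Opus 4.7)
The plan is to recognize Lemma~\ref{losses_equiv} as a power-balance identity at equilibrium. Because both the stated expression and the definition \eqref{eq:Gdef} already share the $\sum_{k \in \mathcal{L}} p_k \log v_k$ term, it suffices to show that the resistive co-content $\sum_{(i,j) \in \mathcal{E}} (v_i - v_j)^2/(2R_{ij})$ equals $\sum_{k \in \mathcal{L}} p_k(V_0 - v_k)/(2v_k)$ at equilibrium. Physically this is the familiar statement that the total ohmic dissipation equals source power minus load power; nothing deeper than Ohm's law, KCL, and a swap of summation order is required.

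First, I would use the equilibrium form of \eqref{eq:Current_Dynamics}, namely $R_\alpha i_\alpha = \sum_{k} \nabla_{\alpha k} v_k$ for each edge $\alpha = (i,j)$, to rewrite $(v_i-v_j)^2/R_{ij}$ as $R_\alpha i_\alpha^2$. Summing over edges turns the first term of $\mathcal{G}$ into $\tfrac{1}{2}\sum_\alpha R_\alpha i_\alpha^2$, i.e.\ half the total ohmic dissipation. I would then rewrite this as $\sum_\alpha i_\alpha \bigl(\sum_k \nabla_{\alpha k} v_k\bigr)$ and swap the order of summation to obtain $\sum_{k \in \mathcal{V}} v_k \bigl(\sum_\alpha \nabla_{\alpha k} i_\alpha\bigr)$.

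Next I would split the outer sum by node type. For loads $k \in \mathcal{L}$, the equilibrium form of \eqref{eq:Voltage_Dynamics} gives $\sum_\alpha \nabla_{\alpha k} i_\alpha = -p_k/v_k$, so the load contribution is $-\sum_{k \in \mathcal{L}} p_k$. For sources the constant $v_k = V_0$ pulls outside the incidence sum, and I would exploit the fact that each row of $\nabla$ has one $+1$ and one $-1$, so $\sum_{k \in \mathcal{V}} \sum_\alpha \nabla_{\alpha k} i_\alpha = \sum_\alpha i_\alpha \sum_k \nabla_{\alpha k} = 0$ identically; this lets me replace $\sum_{k \in \mathcal{S}} \sum_\alpha \nabla_{\alpha k} i_\alpha$ by $\sum_{k \in \mathcal{L}} p_k/v_k$. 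Combining the two pieces yields $\sum_\alpha R_\alpha i_\alpha^2 = V_0 \sum_{k \in \mathcal{L}} p_k/v_k - \sum_{k \in \mathcal{L}} p_k = \sum_{k \in \mathcal{L}} p_k(V_0 - v_k)/v_k$, and halving gives exactly the claimed expression once the $p_k \log v_k$ term is reattached.

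There is no substantive obstacle beyond sign bookkeeping. The one step that benefits from care is the summation-by-parts move: the source voltage $V_0$ must be factored outside the inner sum before the global KCL identity is invoked, rather than trying to apply KCL to the sources in isolation. Everything else is essentially one line.
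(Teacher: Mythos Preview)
Your argument is correct and is essentially the same power-balance computation the paper uses: the paper phrases it as global current conservation \eqref{eq:current_conservation} plus energy conservation \eqref{eq:power_conservation} for the nodal injections, while you obtain the identical identities explicitly by swapping sums over the incidence matrix and invoking the equilibrium forms of \eqref{eq:Current_Dynamics} and \eqref{eq:Voltage_Dynamics}. The only difference is presentational---you spell out the Tellegen-type step that the paper summarizes as ``energy conservation.''
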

\begin{proof}
Assume that $i_k$ with $k\in\mathcal{V}$ are the nodal currents leaving the sources or the loads. The global current conservation law implies that
\begin{equation}\label{eq:current_conservation}
 \sum_{k \in\mathcal{S}} i_k = - \sum_{k \in\mathcal{L}} i_k
\end{equation}
On the other hand, whenever the voltage on all the source buses is given by $V_0$, and there is no capacitor charging/discharging current at equilibrium, it follows from the energy conservation that the energy produced by the sources is equal to energy consumed by the loads plus the energy dissipated in the lines, or more formally
\begin{equation}\label{eq:power_conservation}
  \sum_{k \in\mathcal{S}} i_k V_0 = - \sum_{k \in\mathcal{L}} i_k v_k + \sum_{(i,j) \in\mathcal{E}} \frac{(v_i - v_j)^2}{R_{ij}}
\end{equation}
Combining the definition \eqref{eq:Gdef} with the relations \eqref{eq:current_conservation} and \eqref{eq:power_conservation} one arrives at \eqref{G_losses}.
\end{proof}

\begin{lemma}
\label{SingleSwitchKE}
For a single load switching event in a network initially at equilibrium, where only one load $\kappa \in\mathcal{L}$ changes, the following bound holds for the potential $\mathcal{P}$:
\begin{equation}
%\frac{\tau_{\mathrm{max}}}{2}\sum_{k\in\mathcal{L}} C_k\dot{v}^2_k
\mathcal{P}(x(t^+)) \leq \mathcal{G}^+_\mathrm{ini} + \frac{\tau_{\mathrm{max}}}{2C_{\kappa}}\left(\frac{p_{\kappa}^--p_{\kappa}^+}{V_{\mathrm{high}}^-} \right)^2 \label{eq:Pplusbound}
\end{equation}
\begin{equation}
\mathcal{G}_{\mathrm{ini}}^+ = \frac{p_\Sigma^-}{2}\frac{V_0-V_{\mathrm{high}}^-}{V_{\mathrm{high}}^-} + p_\Sigma^+ \log V_0
\end{equation}
\end{lemma}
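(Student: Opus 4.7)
\textbf{Proof proposal for Lemma \ref{SingleSwitchKE}.}

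The plan is to compute $\mathcal{P}(x(t^+))$ exactly using continuity of the state variables, and then upper-bound the static part $\mathcal{G}^+(v^-)$ and the transient part separately. First I would exploit the fact that the state vector $x = [i_\mathcal{E}^T, v_\mathcal{L}^T]^T$ is continuous across the switching event, while only the parameter $p_\kappa$ jumps. From \eqref{eq:Current_Dynamics} the line-current derivatives $\dot{i}_\alpha$ depend only on $i$ and $v$, not on $p$, so they are continuous too; since $\dot{i}_\alpha(t^-)=0$ at equilibrium, the entire inductive kinetic term in $\mathcal{P}$ vanishes at $t^+$. For the voltage derivatives I would substitute the equilibrium relation $\sum_\alpha \nabla_{\alpha k}i_\alpha^- = -p_k^-/v_k^-$ into \eqref{eq:Voltage_Dynamics} evaluated at $t^+$, obtaining $C_k\dot{v}_k(t^+) = (p_k^- - p_k^+)/v_k^-$. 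Because only load $\kappa$ switches, $\dot{v}_k(t^+) = 0$ for $k \neq \kappa$, so the capacitive kinetic term collapses to a single contribution $\frac{\tau_{\max}}{2 C_\kappa}\bigl((p_\kappa^- - p_\kappa^+)/v_\kappa^-\bigr)^2$. Bounding $v_\kappa^- \geq V_\mathrm{high}^-$ via Lemma \ref{th:vlow} yields the transient term as stated in \eqref{eq:Pplusbound}.

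Next I would bound $\mathcal{G}^+(v^-)$. Since $v^-$ is the equilibrium corresponding to the pre-switching powers $p^-$, Lemma \ref{losses_equiv} applies to $\mathcal{G}^-(v^-)$, which gives the identity $\sum_{(i,j)}(v_i^- - v_j^-)^2/(2R_{ij}) = \tfrac12\sum_{k\in\mathcal{L}}(p_k^-/v_k^-)(V_0 - v_k^-)$ for the resistive co-content of the conductor losses. Substituting $p_k^+$ into the log term (the only place where the powers enter $\mathcal{G}$ other than through the equilibrium condition) yields the decomposition
\begin{equation*}
    \mathcal{G}^+(v^-) = \tfrac{1}{2}\sum_{k\in\mathcal{L}} \tfrac{p_k^-}{v_k^-}(V_0 - v_k^-) + \sum_{k\in\mathcal{L}} p_k^+ \log v_k^-.
\end{equation*}
To recover $\mathcal{G}_\mathrm{ini}^+$ I would then use the two-sided bound $V_\mathrm{high}^- \leq v_k^- \leq V_0$ on each equilibrium load voltage. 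The lower bound is exactly Lemma \ref{th:vlow}; the upper bound follows from a discrete maximum-principle argument applied to the equilibrium KCL $p_k/v_k = \sum_j (v_j - v_k)/R_{kj}$, which forces every load voltage to lie strictly below the source voltage $V_0$. Since $(V_0 - v)/v$ is decreasing in $v$ on $(0,V_0]$ and $\log v$ is increasing, replacing $v_k^-$ by $V_\mathrm{high}^-$ in the first sum and by $V_0$ in the second gives
\begin{equation*}
    \mathcal{G}^+(v^-) \leq \tfrac{p_\Sigma^-}{2} \tfrac{V_0 - V_\mathrm{high}^-}{V_\mathrm{high}^-} + p_\Sigma^+ \log V_0 = \mathcal{G}_\mathrm{ini}^+.
\end{equation*}
Adding the transient term computed earlier completes the proof.

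The main obstacle is not a technical difficulty but the bookkeeping around what changes and what does not at the switching event: one must carefully distinguish the values of $\mathcal{G}$ and $\dot{v}_k$ evaluated with pre- vs. post-switching powers but always at the continuous pre-switching state $v^-$, and recognize that the equilibrium identity of Lemma \ref{losses_equiv} pertains to the old powers while the log contribution must be taken with the new ones. The uniform upper bound $v_k^- \leq V_0$ is the one auxiliary fact not explicitly stated earlier, and it is the only place where a short standalone argument (a maximum-principle-style inspection of the KCL at an interior maximizer) is required.
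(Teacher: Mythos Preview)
Your proposal is correct and follows essentially the same route as the paper: split $\mathcal{P}(x(t^+))$ into the static part $\mathcal{G}^+(v^-)$ and the kinetic part, kill the inductive term by continuity of $\dot i_\alpha$, reduce the capacitive term to the single load $\kappa$ via the pre-switch equilibrium relation, and then bound $\mathcal{G}^+(v^-)$ using Lemma~\ref{losses_equiv} together with $V_\mathrm{high}^- \le v_k^- \le V_0$. If anything you are slightly more careful than the paper, which simply asserts the upper bound $v_k^- \le V_0$ without the maximum-principle justification you supply.
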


\begin{proof}
We start by bounding the $\mathcal{G}(v(t^+))$. Noting that Lemma \ref{th:vlow} defines the lower bound of voltage level $v_k \geq V_{\mathrm{high}}(p_{\Sigma}^-)$, while the upper bound is $v_k \leq V_0$, we use the expression \eqref{G_losses} in Lemma \ref{losses_equiv} to obtain the following bound:
\begin{align}
    \mathcal{G}^+(v(t^-)) &= \sum_{i \in \mathcal{L}} \left[ \frac{p_i^-}{v_i^-} \frac{(V_0 - v_i^-)}{2} + p_i^+ \log(v_i^-) \right ]\nonumber\\
    & \leq \frac{V_0-V_{\mathrm{high}}^-}{2V_{\mathrm{high}}^-}\sum_{i \in \mathcal{L}} p_i^-+\log(V_0)\sum_{i \in \mathcal{L}}  p_i^+\nonumber
    \\
    &\leq \frac{p_{\Sigma}^-}{2}\frac{V_0-V_{\mathrm{high}}^-}{V_{\mathrm{high}}^-}+p_{\Sigma}^+\log(V_0)   \label{eq:Gplusbound}
\end{align}
Next, due to the continuity of the state variables, and relation \eqref{eq:Current_Dynamics}, one has $L\dot{i}(t^-)=L\dot{i}(t^+)=0$, so only the terms involving $\dot{v}$ change after switching. Given that $p_k \geq 0$ for all $k\in\mathcal{L}$ and $\dot{v}_k =0$ for $k\neq \kappa$ we obtain.
%\textcolor{red}{
\begin{align}
    \frac{\tau_{\mathrm{max}}}{2}\sum_{k\in\mathcal{L}} C_k\dot{v}^2_k &= \frac{\tau_{\max}}{2} \frac{1}{C_\kappa}\left(-\frac{p_\kappa^+}{v_\kappa^-}-\sum_{\alpha\in \mathcal{E}} \nabla_{\alpha \kappa} i_\alpha\right)^2\nonumber\\
    &\leq\frac{\tau_{\max}}{2C_{\kappa}} \left(\frac{p_\kappa^--p_\kappa^+}{v_\kappa^-}\right)^2\nonumber \\
     &\leq\frac{\tau_{\max}}{2C_{\kappa}} \left(\frac{p_\kappa^--p_\kappa^+}{V_{\mathrm{high}}^-}\right)^2
    \label{eq:single}
\end{align}
Combining the bounds \eqref{eq:single} with \eqref{eq:Gplusbound} we arrive at \eqref{eq:Pplusbound}.
\end{proof}
% Next, we prove the following two lemmas to establish the bounds on the ``potential energy'' after the switching event.

% \begin{lemma}
% \label{Gplus_sep}
% Following an arbitrary or single-load switching event, the potential energy of the system immediately after the switching occurred is bounded as $\mathcal{G}^+\leq \mathcal{G}_{\mathrm{ini}}^+$ where
% \begin{equation}
% \mathcal{G}_{\mathrm{ini}}^+ = \frac{p_\Sigma^-}{2}\frac{V_0-V_{\mathrm{high}}^-}{V_{\mathrm{high}}^-} + p_\Sigma^+ \log V_0
% \end{equation}
% \end{lemma}
% \begin{proof}
% Noting that Lemma \ref{th:vlow} defines the lower bound of voltage level $v_k \geq V_{\mathrm{high}}(p_{\Sigma}^-)$, while the upper bound is $v_k \leq V_0$, we have
% \begin{align}
%     \mathcal{G}^+(v(t^-)) &= \sum_{i \in \mathcal{L}} \left[ \frac{p_i^-}{v_i^-} \frac{(V_0 - v_i^-)}{2} + p_i^+ \log(v_i^-) \right ]\nonumber\\
% %    & \leq \frac{V_0-V_{\mathrm{high}}^-}{2V_{\mathrm{high}}^-}\sum_{i \in \mathcal{L}} p_i^-+\log(V_0)\sum_{i \in \mathcal{L}}  p_i^+\\
%     &\leq \frac{p_{\Sigma}^-}{2}\frac{V_0-V_{\mathrm{high}}^-}{V_{\mathrm{high}}^-}+p_{\Sigma}^+\log(V_0)   \nonumber
% \end{align}
% \end{proof}
% These two lemmas allow us to derive the following central result of this paper.

\begin{theorem}
\label{thm:TransientResult}
The system starting at stable equilibrium and experiencing an arbitrary single-load switching event returns back to a stable equilibrium point whenever the capacitors installed on every load satisfy both Equation \eqref{eq:CVtr} and:

\begin{equation}\label{eq:Cbound}
C_\kappa >
    \underset{p_{\Sigma}^-,\ p_{\Sigma}^+} \max
    \frac{\tau_{\max}}{2\left(\mathcal{G}^+_\mathrm{tr}-\mathcal{G}^+_{\mathrm{ini}}\right)} \left(\frac{p_{\Sigma}^--p_{\Sigma}^+}{V_\mathrm{high}^-}\right)^2
%    C_{\kappa,\mathrm{tr}}\left(p_{\Sigma}^-,p_{\Sigma}^+ \right) \\
\end{equation}
\begin{equation*}
\begin{aligned}
    \mathrm{subject\,\,to}
    \quad & p^-_\Sigma \leq P_{\max}\\
    & p^+_\Sigma \leq P_{\max}\\
    & |p_{\Sigma}^+-p_{\Sigma}^-|\leq p_{\kappa}^{\max}
\end{aligned}
\end{equation*}
% where
% \begin{equation}
%     C_{\kappa,\mathrm{tr}}\left(p_\Sigma^-,p_\Sigma^+ \right) =\frac{\tau_{\max}}{2\left(\mathcal{G}^+_\mathrm{tr}-\mathcal{G}^+_{\mathrm{ini}}\right)} \left(\frac{p_{\Sigma}^--p_{\Sigma}^+}{V_\mathrm{high}^-}\right)^2
% \end{equation}
\end{theorem}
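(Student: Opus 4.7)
The plan is the standard sublevel-set Lyapunov argument, assembled directly from the four results established above. Condition \eqref{eq:Cbound} will appear as exactly the constraint needed to force the upper bound on $\mathcal{P}(x(t^+))$ from Lemma \ref{SingleSwitchKE} strictly below the boundary lower bound $\mathcal{G}^+_\mathrm{tr}$ from Theorem \ref{thm:GLowerBound}. Once that separation is in place, the sublevel set through $x(t^+)$ is trapped inside $\mathcal{T}$, where Lemma \ref{CVtr} supplies $\dot{\mathcal{P}}<0$ and drives the state to the unique equilibrium $x^\mathrm{sep}$.

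First I would verify that $x(t^+)\in\mathcal{T}$. The system sits at a stable equilibrium just before the event, so Lemma \ref{th:vlow} together with \eqref{eq:Pmax_arbitrary} places each load voltage at or above $V_{\min}\geq V_\mathrm{tr}$; continuity of the state across the switching instant then gives $x(t^+)=x(t^-)$ in the interior of $\mathcal{T}$. Substituting the Lemma \ref{SingleSwitchKE} bound into the requirement $\mathcal{P}(x(t^+))<\mathcal{G}^+_\mathrm{tr}$, and using the single-load identity $p_\kappa^+-p_\kappa^-=p_\Sigma^+-p_\Sigma^-$, one obtains an inequality that depends on the unknown switching event only through the pair $(p_\Sigma^-,p_\Sigma^+)$, since $V_\mathrm{high}^-=V_\mathrm{high}(p_\Sigma^-,R_{\max})$ and both $\mathcal{G}^+_\mathrm{ini}$ and $\mathcal{G}^+_\mathrm{tr}$ are functions of the aggregate loads. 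Solving for $C_\kappa$ and taking the supremum over the admissible set $p_\Sigma^\pm\leq P_{\max}$, $|p_\Sigma^+-p_\Sigma^-|\leq p_\kappa^{\max}$ recovers \eqref{eq:Cbound}.

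With both \eqref{eq:CVtr} and \eqref{eq:Cbound} in force, $\mathcal{P}(x(t^+))<\mathcal{G}^+_\mathrm{tr}$, so the sublevel set $\{x:\,\mathcal{P}(x)\leq\mathcal{P}(x(t^+))\}$ cannot intersect $\partial\mathcal{T}$. Lemma \ref{CVtr} guarantees $\dot{\mathcal{P}}<0$ on $\mathcal{T}\setminus\{x^\mathrm{sep}\}$, so the trajectory can neither escape this sublevel set nor reach $\partial\mathcal{T}$, and therefore remains in $\mathcal{T}$ for all $t>t^+$. Corollary \ref{co:LowPbound} identifies $x^{\mathrm{sep}}$ as the unique minimum of $\mathcal{P}$ inside $\mathcal{T}$, and LaSalle's invariance principle applied to this compact invariant set delivers convergence to $x^\mathrm{sep}$.

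The main obstacle is ensuring that the optimization in \eqref{eq:Cbound} is well posed, i.e.\ that $\mathcal{G}^+_\mathrm{tr}-\mathcal{G}^+_\mathrm{ini}$ is strictly positive over the entire feasible region; otherwise the bound on $C_\kappa$ would blow up. The strict convexity of $\mathcal{G}$ on the high-voltage domain (Theorem \ref{th:hessian}), combined with the strict slack $P_{\max}<P_0$ assumed at the start of the section, is what makes this gap strictly positive away from the trivial case $p_\Sigma^-=p_\Sigma^+$, where the numerator vanishes at the same rate and the ratio is resolved by continuity. The remainder of the argument is essentially mechanical.
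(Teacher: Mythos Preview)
Your argument follows the same route as the paper's: combine Lemma \ref{SingleSwitchKE} with Theorem \ref{thm:GLowerBound} so that \eqref{eq:Cbound} forces $\mathcal{P}(x(t^+))<\mathcal{G}^+_\mathrm{tr}$, then use Lemma \ref{CVtr} for invariance of the sublevel set inside $\mathcal{T}$ and Corollary \ref{co:LowPbound} for convergence to $x^\mathrm{sep}$.

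The one substantive discrepancy is your final paragraph. The gap $\mathcal{G}^+_\mathrm{tr}-\mathcal{G}^+_\mathrm{ini}$ is \emph{not} guaranteed to be positive over the whole feasible region, and strict convexity of $\mathcal{G}$ together with $P_{\max}<P_0$ does not establish that it is. As the paper records in Remark \ref{remark:Pcrit}, for $p_\kappa^{\max}$ above a critical level $P_\kappa^{\mathrm{crit}}$ one can have $\mathcal{G}^+_\mathrm{ini}\geq\mathcal{G}^+_\mathrm{tr}$ for some admissible $(p_\Sigma^-,p_\Sigma^+)$, and then no finite $C_\kappa$ satisfies \eqref{eq:Cbound}. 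This does not invalidate the theorem---it is stated conditionally on \eqref{eq:Cbound} holding and is simply vacuous when the right-hand side blows up---but it means your proposed resolution of the ``main obstacle'' is incorrect and should be dropped rather than defended.
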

\begin{proof}
This result is proven by combining all previous bounds. Specifically, the condition \eqref{eq:Cbound} ensures that the value of the post-switch potential function $\mathcal{P}(x(t^+))$ estimated in \eqref{eq:Pplusbound} is less than $\mathcal{G}^+_\mathrm{tr}$ defined in \eqref{eq:vuep_bound}, which is the minimal value of the potential that any network can achieve within the transiently acceptable domain $\mathcal{T}$. Temporal decay of $\mathcal{P}$ established by Lemma \ref{CVtr} under assumption \eqref{eq:CVtr} implies that the system will stay inside $\mathcal{T}$. Therefore, in accordance to Lyapunov theorem, positive-definiteness of $\mathcal{P}(x)- \mathcal{G}^{\mathrm{sep}}$ (Corollary \ref{co:LowPbound}) and its temporal decay (Lemma \ref{CVtr}) inside $\mathcal{T}\setminus \{x^\mathrm{sep}\}$ implies that the system will converge to $x^\mathrm{sep}$.

%    This follows directly from the combination of Lemmas \ref{SingleSwitchKE}, \ref{Gplus_sep} and Theorem \ref{thm:GLowerBound}.
\end{proof}
\begin{remark}\label{remark:Pcrit}
For each $V_\mathrm{tr}$, the sufficient bound for $C$ defined by \eqref{eq:Cbound} exists only for small enough values of power, $p_\kappa^{\max} \leq P_{\kappa}^{\mathrm{crit}}$. Above those levels, stability cannot be certified for any capacitance as the initial energy $\mathcal{G}_{\mathrm{ini}}^+$ may exceed $\mathcal{G}_{\mathrm{tr}}^+$ for some admissible values of $p_\Sigma^-, p_\Sigma^+$.
\end{remark}
\begin{remark}
Numerical simulation demonstrates that the worst case switching scenario (the scenario that maximizes $C_{\kappa,\mathrm{tr}}$) corresponds to $p_\Sigma^- =  P_{\max} - p_{\kappa}^{\max}$ and $p_\Sigma^+ = P_{\max}$, that is load $\kappa$ switching on to its maximum power and bringing the total network loading to $P_{\max}$. %For practical purposes, the optimization problem does not need to explicitly be carried out.
\end{remark}
The nature of the derivation above implies that the condition \eqref{eq:Cbound} is sufficient but not necessary. The following Lemma introduces a necessary condition:%On the other hand, the necessary condition on $C_k$ can be derived by enforcing the asymptotic stability in ad hoc setting, and more specifically ensuring that $\dot{P} \leq 0$ in the vicinity of any feasible high voltage equilibrium:
\begin{lemma}
For a system to maintain stability in an ad hoc setting, it is necessary that each load capacitance satisfies
\begin{equation}
\label{eq:Css}
    C_{\kappa} > \frac{\tau p_k^{\max}}{V_{\min}^2}
\end{equation}
\end{lemma}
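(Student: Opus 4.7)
The plan is to establish necessity by constructing a concrete ad hoc configuration in which local linear stability at the intended high-voltage equilibrium fails the moment $C_\kappa$ drops to $\tau p_k^{\max}/V_{\min}^2$. Since an ad hoc design must certify stability for every admissible topology, any single realizable counterexample forces the bound.

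The simplest such witness is the two-bus system of Section~\ref{sec:twobus}, now tuned to the boundary of the admissible operating region. Concretely, I would set the single load to draw its maximal power $p = p_k^{\max}$ and choose the line resistance $R = (V_0 - V_{\min})V_{\min}/p_k^{\max}$ so that the high-voltage equilibrium sits exactly at $v^\ast = V_{\min}$. This is admissible because assumption \eqref{eq:Pmax_arbitrary} guarantees $R \le R_{\max}$ whenever $p_k^{\max} \le P_{\max}$, so this two-bus instance is a legitimate ad hoc realization. Keep the line time constant $\tau = L/R$ fixed.

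Next I would linearize the dynamics
\[
C \dot v = -\frac{p}{v} + i,\qquad L \dot i = -R i + V_0 - v
\]
around $(v^\ast,i^\ast)$ with $v^\ast = V_{\min}$. The Jacobian is
\[
J = \begin{pmatrix} p/(C v^{\ast 2}) & 1/C \\ -1/L & -R/L \end{pmatrix},
\]
so its trace equals $p/(C V_{\min}^2) - R/L$. The Routh--Hurwitz criterion requires $\operatorname{tr}J < 0$, which is exactly $C > (L/R)\,p_k^{\max}/V_{\min}^2 = \tau p_k^{\max}/V_{\min}^2$. Whenever $C_\kappa$ fails \eqref{eq:Css}, the trace is non-negative and the equilibrium has an eigenvalue with non-negative real part, so the intended stable equilibrium is in fact locally unstable. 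Since an ad hoc design cannot preclude this two-bus realization, the bound must hold on every load.

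The only subtle step is the choice of parameters that pushes the equilibrium to the worst-case voltage $V_{\min}$ while respecting the standing ad hoc assumptions; this is why I pick $R$ directly rather than reasoning abstractly about the equilibrium map, and why the argument is cleanest in the two-bus case where the nose curve is given in closed form by \eqref{eq:Vhigh}. The determinant condition $V_{\min}^2 > R p_k^{\max}$, which is automatically satisfied by \eqref{eq:Pmax_arbitrary}, plays no role here: the destabilizing mechanism is the negative-incremental-resistance term in the trace, not in the determinant, and this is precisely what the capacitance in \eqref{eq:Css} is sized to overcome.
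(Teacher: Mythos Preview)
Your approach matches the paper's: exhibit the two-bus system as an admissible ad hoc instance whose small-signal (linearized) stability fails precisely when \eqref{eq:Css} is violated. The explicit Jacobian and Routh--Hurwitz computation is a fuller version of what the paper leaves as a one-line pointer back to Section~\ref{sec:twobus}.

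There is, however, a concrete error in your admissibility check. From \eqref{eq:Pmax_arbitrary} one has $V_{\min}(V_0 - V_{\min}) \ge P_{\max} R_{\max}$, so with $p_k^{\max} \le P_{\max}$ your choice $R = V_{\min}(V_0 - V_{\min})/p_k^{\max}$ satisfies
\[
R \;=\; \frac{V_{\min}(V_0 - V_{\min})}{p_k^{\max}} \;\ge\; \frac{V_{\min}(V_0 - V_{\min})}{P_{\max}} \;\ge\; R_{\max},
\]
not $R \le R_{\max}$ as you claim; the inequality points the wrong way. For a generic load with $p_k^{\max} < P_{\max}$, your witness line therefore violates the aggregate-resistance assumption and is not an admissible ad hoc realization. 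The paper avoids this by taking the extremal case $p_k^{\max} = P_{\max}$ (with \eqref{eq:Pmax_arbitrary} tight), in which case $R = R_{\max}$ and the construction is admissible; with that specialization your trace argument goes through verbatim. If instead you keep $R = R_{\max}$ for a smaller $p_k^{\max}$, the equilibrium sits strictly above $V_{\min}$ and the trace condition yields only the weaker necessary bound $C_\kappa > \tau p_k^{\max}/V_{\mathrm{high}}^2$.
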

\begin{proof}
    As follows from the discussion in section \ref{sec:twobus}, violation of this condition results in the loss of asymptotic stability for a two-bus system with load power $p^{\max} = P_{\max}$.
\end{proof}
Tighter necessary conditions (not shown) can be obtained numerically by simulating transients in specific networks, and these conditions are in close agreement with \eqref{eq:Css}. %which agrees closely with (not shown). The resulting does decrease the gap too significantly in comparison to \eqref{eq:Css}.

\section{Discussion}

A comparison of the lower bounds on $C$, normalized by $C_0=\tau_{\max}/R_{\max}$, during a switching event, whose magnitude is normalized by the natural unit of power  $P_0 = V_0^2/4 R_{\max}$, is presented in Figure \ref{fig:compiled_results}. The larger the load, the larger the capacitance required to ensure stability. The small signal stability constraint (Eq. \eqref{eq:Css}) is a necessary condition while the transient stability constraint ( Eqs. \eqref{eq:CVtr} and \eqref{eq:Cbound}) is a sufficient condition. The gap between these constraints gives an indication of how conservative the sufficient bounds are.

%These bounds were constructed for an arbitrary network configuration and can therefore be improved given some knowledge of the network topology or class. A ``star`` configuration, a prevalent structure for rural microgrids, has centralized generation and with loads attached radially. In this case, knowledge of the minimum load demanded would provide less conservative bounds.

The choice of $V_\mathrm{tr}$ alters the capacitance constraints. Increasing the levels of $V_{\mathrm{tr}}$ decreases the requirements imposed by Eq. \eqref{eq:CVtr} but increases the ones from \eqref{eq:Cbound}. Furthermore, high levels of $V_{\mathrm{tr}}$ result in relatively small critical levels of load power as discussed in Remark \ref{remark:Pcrit}. For example, given $V_\mathrm{tr}=0.66V_0$ as in Figure \ref{fig:compiled_results}, the maximum load power consumption is $P_{\kappa}^{\mathrm{crit}} \approx 0.47P_0$. %Above this value, switching stability cannot be guaranteed regardless of parallel load capacitance because the post switching energy will always exceed $\mathcal{G}^+_{\mathrm{tr}}$.
A trade-off therefore exists between power demanded and magnitude of the capacitance as well as between choice of $V_\mathrm{tr}$ and $P_k^{\mathrm{crit}}$.

\begin{figure}
\centering
\includegraphics[width =0.9\linewidth]{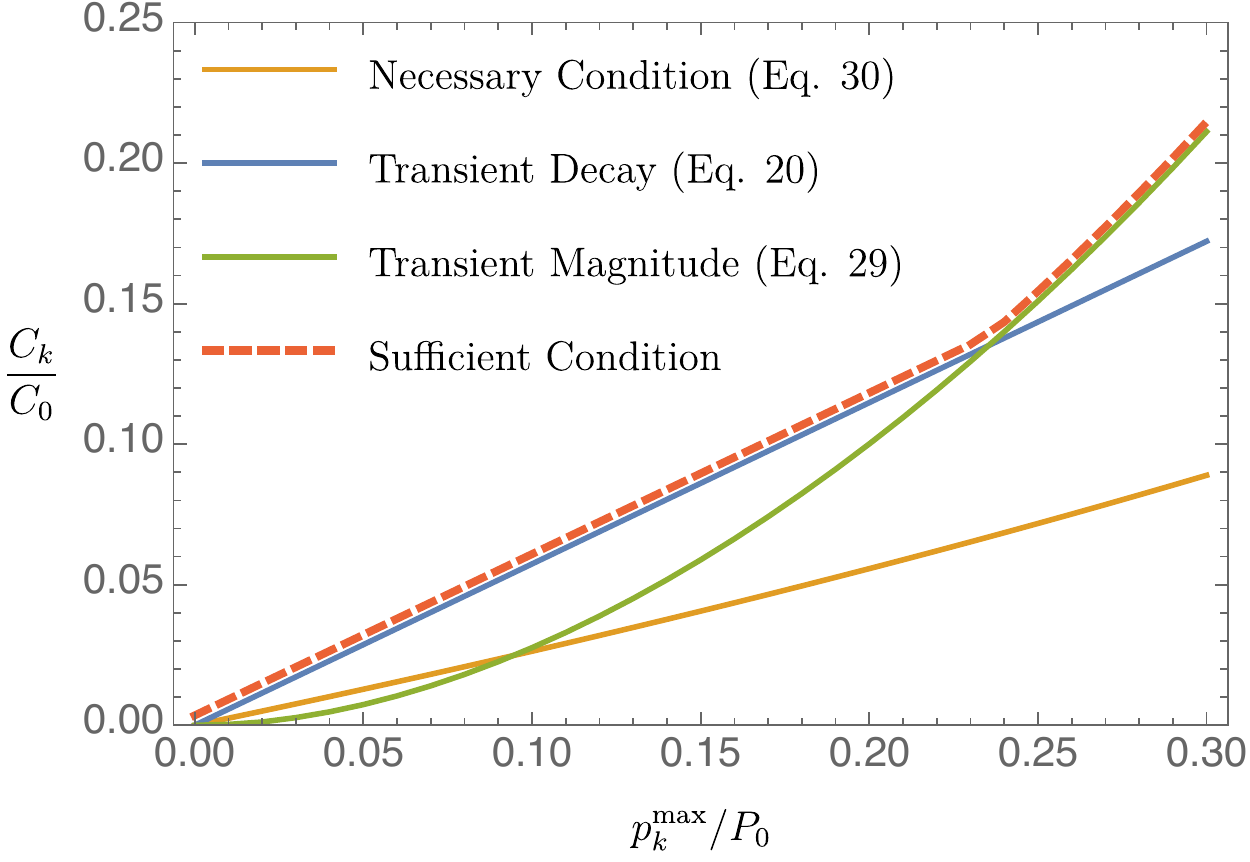}
\caption{Comparison of lower bounds on $C_k/C_0$ for $V_{\mathrm{tr}}=0.66V_0$.}
\label{fig:compiled_results}
\end{figure}

\section{Proposed Design Scheme And Path Forward}

We have established a lower bound on load capacitance which can guarantee network stability without pre-determining the network topology. Our bound provides a theoretical foundation for ad hoc microgrids with modular source and load units. The development process for these microgrids is
\begin{enumerate}
\item{Define acceptable voltage levels across all units based on converter constraints: nominal voltage $V_0$, minimal acceptable equilibrium voltage $V_\text{min}$, minimum acceptable voltage during switching transients $V_\mathrm{tr}$.}
% \begin{itemize}
% \item{}
% \item{}
% \item{}
% \end{itemize}
\item{Select system parameters: the upper bound on system loading $P_{\max}$ and the maximum line resistance $R_{\max}$ (determined by line material, diameter and length).}
% \begin{itemize}
% \item{}
% \item{}
% \end{itemize}
\item{For each load $\kappa$ with maximum power $p_{\kappa}^{\max}$, select a capacitance that ensures stability according to the constraints in Figure \ref{fig:compiled_results}.}
\end{enumerate}

%\subsection{System Design Procedure}

% The stability analysis conducted in this work allows for the development of a streamlined design process in which the rules suggested here can be applied offline, before the network structure is fixed. The design process can be summarized as:
% \begin{enumerate}
% \item Define the acceptable voltage levels based on the converter constraints: nominal voltage $V_0$, minimum acceptable equilibrium voltage level $V_{\mathrm{min}}$, minimum acceptable voltage during transients $V_\mathrm{tr}$.
% %(or, equivalently, minimum efficiency $\eta$).
% \item Select system parameters: the upper bound on the maximum system loading $P_{\mathrm{max}}$ and the maximum line resistance $R_{\max}$ (determined by the line material, diameter, and length).
% \item For each load $k$ with maximal power $p_k^{\max}$ size the capacitance in accordance to \eqref{eq:CVtr} and \eqref{eq:Cbound}.
% \end{enumerate}

This process is independent of the network topology and therefore it does not need to be repeated for each community. Instead, it can be performed once to develop, for example, electricity access ``kits'' that could be produced in bulk and easily adapted to the changing needs of individual communities without oversight.

%\section{Conclusions and Path Forward}
%Ad hoc microgrids present an alternative to the inherently centralized nature of bulk power systems. By allowing nonspecialist consumers to safely configure and maintain their own electricity infrastructure and markets, they have unparalleled potential to provide power to communities underserved by traditional power systems. In the past,

%The development of ad hoc networks has been impeded by the difficulty of guaranteeing the stability of arbitrary interconnections of power electronic subsystems a priori. In this paper we have relied on a number of classical control theory techniques, most notably Brayton-Moser potentials, and recent approaches to the analysis of load flow equations, to develop conditions under which ad hoc dc microgrids are stable. In particular, we have expanded our small signal analysis from \cite{cdc2016} to establish constraints on individual load capacitance values (Theorem \ref{thm:TransientResult}) that make the system robust to large loading changes. These constraints allowed for the development of a simple design procedure to reduce planning and aid in the development of ad hoc DC microgrids.

Several exciting paths needs to be further explored. These include the generalization of the results to restricted topologies, more detailed load and source models. Similarly, more research is required to understand how stability can be enforced on secondary control loops on sources \cite{zhao2015distributed, cdc2016,  de2016power} in the presence of inductive delays. %Finally, practical restrictions on the network topology (for example, considering only networks with one source) may be able to provide tighter bounds on the required capacitance.
%In this paper, we have addressed some challenges inherent to ad hoc microgrids and hope that these challenges continue to be resolved, with the ultimate goal of enabling the large-scale deployment of truly decentralized power systems.

%Further, an optimization of the lower bound on voltage during transients, $V_\mathrm{tr}$ would aid in reducing the discrepancy between the necessary and sufficient bounds on capacitance.
\bibliographystyle{ieeetr}
\bibliography{biblio.bbl}

\end{document}